\tikzstyle arrowstyle=[scale=1]
\tikzstyle directed=[postaction={decorate,decoration={markings,
    mark=at position .65 with {\arrow[arrowstyle]{stealth}}}}]
\tikzstyle reverse directed=[postaction={decorate,decoration={markings,
    mark=at position .3 with {\arrowreversed[arrowstyle]{stealth};}}}]
    \tikzstyle reverse directedd=[postaction={decorate,decoration={markings,
    mark=at position .65 with {\arrowreversed[arrowstyle]{stealth};}}}]
\tikzset{defnode/.style={draw, circle, fill=white, inner sep=0, minimum size=5pt}}
\tikzset{invisible/.style={draw = white, circle, fill=white, inner sep=0, minimum size=6pt}}
\tikzset{emitter/.style={draw, circle, gray,very thin, fill=green, inner sep=0, minimum size=6pt}}
\tikzset{dism/.style={draw, circle, fill=gray, inner sep=0, minimum size=6pt}}
\tikzset{receiver/.style={draw, circle, gray,very thin, fill=red, inner sep=0, minimum size=6pt}}
\tikzset{emitter_receiver/.style={shape = circle split, draw, 
          minimum size = 8pt, inner sep = 0,
          rotate=30}}
\newtheorem{theorem}{Theorem}
\newtheorem{observation}{Observation}
\newtheorem{conjecture}{Open question}
\newtheorem{lemma}{Lemma}
\newtheorem{remark}{Remark}
\theoremstyle{definition}
\newtheorem{case}{Case}
\newtheorem{subcase}{Subcase}
\numberwithin{subcase}{case}
\newcommand{\ie}{{\it i.e.,}\xspace}
\newcommand{\eg}{{\it e.g.,}\xspace}
\newcommand{\m}{\ensuremath{{n \choose 2}}}
\newcommand{\G}{{\ensuremath{\cal G}}}
\newcommand{\HH}{{\ensuremath{\cal H}}}
\newcommand{\hexa}{
  \begin{tikzpicture}[scale=3.8]
    \path (0,0) coordinate (m);
    \path (m)+(90:1) node[defnode] (a){};
    \path (m)+(30:1) node[defnode] (b){};
    \path (m)+(330:1) node[defnode] (c){};
    \path (m)+(270:1) node[defnode] (d){};
    \path (m)+(210:1) node[defnode] (e){};
    \path (m)+(150:1) node[defnode] (f){};
    \tikzstyle{every node}=[circle,fill=white,inner sep=1.3pt]
    \ifnum\arg[1]>-1 %
    \draw (a)-- node[pos=.5]{\arg[1]}(b);
    \fi
    \ifnum\arg[2]>-1 %
    \draw (a)-- node[pos=.42]{\arg[2]}(c);
    \fi
    \ifnum\arg[3]>-1 %
    \draw (a)-- node[pos=.38]{\arg[3]}(d);
    \if
    \ifnum\arg[4]>-1 %
    \draw (a)-- node[pos=.58]{\arg[4]}(e);
    \fi
    \ifnum\arg[5]>-1 %
    \draw (a)-- node[pos=.5]{\arg[5]}(f);
    \fi
    \ifnum\arg[6]>-1 %
    \draw (b)-- node[pos=.5]{\arg[6]}(c);
    \fi
    \ifnum\arg[7]>-1 %
    \draw (b)-- node[pos=.42]{\arg[7]}(d);
    \fi
    \ifnum\arg[8]>-1 %
    \draw (b)-- node[pos=.38]{\arg[8]}(e);
    \fi
    \ifnum\arg[9]>-1 %
    \draw (b)-- node[pos=.58]{\arg[9]}(f);
    \fi
    \ifnum\arg[10]>-1 %
    \draw (c)-- node[pos=.5]{\arg[10]}(d);
    \fi
    \ifnum\arg[11]>-1 %
    \draw (c)-- node[pos=.42]{\arg[11]}(e);
    \fi
    \ifnum\arg[12]>-1 %
    \draw (c)-- node[pos=.38]{\arg[12]}(f);
    \fi
    \ifnum\arg[13]>-1 %
    \draw (d)-- node[pos=.5]{\arg[13]}(e);
    \fi
    \ifnum\arg[14]>-1 %
    \draw (d)-- node[pos=.42]{\arg[14]}(f);
    \fi
    \ifnum\arg[15]>-1 %
    \draw (e)-- node[pos=.5]{\arg[15]}(f);
    \fi
  \end{tikzpicture}
}
\begin{document}

\title{Temporal Cliques Admit Sparse Spanners
  \thanks{This article is the full version of an article presented at ICALP 2019~\cite{CPS19}. This paper is best read in colour. If this is not possible, the colour ``green'' appears with a lighter tone than the colour ``red'', making their interpretation possible.}
  \thanks{This research was supported by ANR project ESTATE (ANR-16-CE25-0009-03) and NSERC of Canada.}
}
\author{Arnaud Casteigts\\
LaBRI, University of Bordeaux, France\\
\texttt{arnaud.casteigts@labri.fr}
\and
Joseph G. Peters\\
School of Computing Science\\
Simon Fraser University,
Burnaby, BC, Canada\\
\texttt{peters@cs.sfu.ca}
\and
Jason Schoeters\\
LaBRI, University of Bordeaux, France\\
\texttt{jason.schoeters@labri.fr}
}

\date{ }
\maketitle

\begin{abstract}
Let $G=(V,E)$ be an undirected graph on $n$ vertices and $\lambda:E\to 2^{\mathbb{N}}$ a mapping that assigns to every edge a non-empty set of integer labels (discrete times when the edge is present). Such a labelled graph $\G=(G,\lambda)$ is {\em temporally connected} if a path exists with non-decreasing times from every vertex to every other vertex. In a seminal paper, Kempe, Kleinberg, and Kumar~\cite{KKK02} asked whether, given such a temporally connected graph, a {\em sparse} subset of edges always exists whose labels suffice to preserve temporal connectivity -- a {\em temporal spanner}. Axiotis and Fotakis~\cite{AF16} answered negatively by exhibiting a family of $\Theta(n^2)$-dense temporal graphs which admit no temporal spanner of density $o(n^2)$. In this paper, we give the first positive answer as to the existence of $o(n^2)$-sparse spanners in a dense class of temporal graphs, by showing (constructively) that if $G$ is a complete graph, then one can always find a temporal spanner with $O(n \log n)$ edges.
 \end{abstract}

\section{Introduction}
\label{sec:introduction}

The study of highly dynamic networks has gained interest lately, motivated by epidemics analysis and emerging technological contexts like vehicular networks, robots, and drones, where the entities move and communicate with each other. The communication links in these networks vary with time, leading to the definition of
temporal graph models (also called time-varying graphs or evolving graphs) where temporality plays a key role. In these graphs, 
the properties of interest are often defined over time rather than at a given instant. For example, the graph may never be connected, and yet offer a form of connectivity over time.
In~\cite{CFQS12}, a dozen temporal properties were identified that have been effectively exploited in the distributed computing and networking literature. Perhaps the most basic property is that of {\em temporal connectivity}, which requires that every vertex can reach every other vertex through a temporal path (also called {\em journey}~\cite{BFJ03}), that is, a path whose edges are used over a non-decreasing sequence of times. The times may also be required to be strictly increasing (strict journeys), both cases being carefully discussed in this paper. Temporal connectivity was considered in an early paper by Awerbuch and Even~\cite{AE84} (1984), and studied from a graph-theoretical point of view in the early  2000's in a number of seminal works including Kempe, Kleinberg, and Kumar~\cite{KKK02}, and Bui-Xuan, Ferreira, and Jarry~\cite{BFJ03} (see also~\cite{orda90} for an early study of graphs with time-dependent delays on the edges). More recently, it has been the subject of algorithmic studies, such as~\cite{AGMS17} and \cite{AF16} (discussed below), and \cite{restless,diluna,erlebach,matchings,viard,niedermeier}, which consider algorithms for computing structures related to temporal connectivity. Broad reviews of these topics can be found, \eg in~\cite{CFQS12,holme,michail}, although the list is non-exhaustive and the literature is rapidly evolving.

\subsection{Sparse Temporal Spanners and Related Work}




In a seminal paper, Kempe, Kleinberg, and Kumar~\cite{KKK02} asked whether every temporally connected graph on $n$ vertices has a sparse spanning subgraph that is also temporally connected. In their model, each edge has a single label indicating a time when it is present, thus the edges are identified by their labels, but the discussion is more general. What they are asking, essentially, is whether an analogue of spanning tree exists for temporal graphs when the labels are {\em already fixed}. We call such a temporally connected spanning subgraph a {\em temporal spanner}.

Kempe {\it et al.}\ answer their own question immediately (and negatively) for the particular case of $\Theta(n)$ density, by showing that hypercubes labelled in a certain way need all of their edges to achieve temporal connectivity, thus some temporal graphs of density $\Theta(n \log n)$ cannot be sparsified. The more general question, asking whether $o(n^2)$-sparse spanners always exist in dense temporal graphs remained open for more than a decade, and was eventually settled, again negatively, by Axiotis and Fotakis~\cite{AF16}. The proof in~\cite{AF16} exhibits an infinite family of temporally connected graphs with $\Theta(n^2)$ edges that do not admit $o(n^2)$-sparse spanners. Their construction can be adapted for strict and non-strict journeys.

On the positive side, Akrida {\it et al.}~\cite{AGMS17} show that, if the underlying graph $G$ is a complete graph and every edge is assigned a single globally-unique label, then it is always possible to find a temporal spanner of density ${n \choose 2} - \lfloor n/4 \rfloor$ edges (leaving however the asymptotic density unchanged). Akrida {\it et al.}~\cite{AGMS17} also prove that if the label of every edge in $G$ is chosen uniformly at random (from an appropriate interval), then almost surely the graph admits a temporal spanner with $O(n \log n)$ edges. Both~\cite{AF16} and~\cite{AGMS17} include further results related to the (in-)approximability of finding a {\em minimum} temporal spanner, which is out of the scope of this paper.  

By its nature, the problem of finding a temporal spanner in a temporal graph seems to be significantly different from its classical (\ie non-temporal) version, whether this version considers a static graph (see \eg~\cite{Chew86,NS07,PS89}) or the current topology of an updated dynamic graph (see \eg~\cite{arya,elkin,gottlieb}). The essential difference is that spanning trees always exist in standard (connected) graphs, thus one typically focuses on the trade-off between the density of a solution and a quality parameter like the {\em stretch factor}, rather than on the very existence of a sparse spanner.

\subsection{Contributions}
\label{sec:contributions}
In this paper, we establish that temporal graphs built on top of complete graphs {\em unconditionally} admit $O(n \log n)$-sparse temporal spanners when {\em non-strict} journeys are allowed. Furthermore, such a spanner can be computed in polynomial time. The case of strict journeys requires more discussion. Kempe {\it et al.} observed in~\cite{KKK02} that if every edge of a complete graph is given the same label, then this graph is temporally connected, but no multi-hop {\em strict} journey exists, thus none of the edges can be removed, and the problem is trivially unsolvable. To make the problem interesting when only strict journeys are allowed, one should constrain the edge labelling to avoid such a pathological situation. In the present case, we require that a sub-labelling of one label per edge exists in which any two adjacent edges have different labels. This formulation slightly generalizes the single-label global-unicity assumption made in~\cite{AGMS17} (although essentially equivalent) and eliminates the distinction between strict and non-strict journeys. Under this restriction, we establish that every temporal graph whose underlying graph is complete admits an $O(n \log n)$-sparse temporal spanner. Moreover, if the restricted labelling is given, then the spanner can be computed in polynomial time. (The problem of deciding whether a general labelling admits such a sub-labelling is not discussed here; it might be computationally hard.)

We start by observing that the above two settings one-way reduce to the setting where every edge has a single label and two adjacent edges have different labels. The reduction is ``one-way'' in the sense that the transformed instance may have fewer feasible journeys than the original instance, but all of these journeys correspond to valid journeys in the original instance, so a temporal spanner computed in the transformed instance is valid in the original instance. As a result, our main algorithm takes as input a complete graph $\G$ with single, locally distinct labels, and computes an $O(n\log n)$-sparse temporal spanner of $\G$ in polynomial time. 

In summary, we give the first positive answer to the question of whether sparse temporal spanners always exist in a class of dense graphs, focusing here on the case of complete graphs. This answer complements the negative answer by Axiotis and Fotakis~\cite{AF16} and motivates more investigation to understand where the transition occurs between their negative result (no sparse spanners exist in some dense temporal graphs) and our positive result (they essentially always exist in complete graphs). Our algorithms are based on a number of original techniques, which we think may be of more general interest for problems related to temporal connectivity.

The paper is organized as follows. In Section~\ref{sec:preliminaries}, we define the model and notation, and describe the one-way reductions that allow us to concentrate subsequently on single and distinct labels. We also mention two basic techniques of interest for this problem, although they are not used subsequently in the paper, namely the \textit{sub-clique} technique (from~\cite{AGMS17}) and the {\em pivoting} technique (introduced here). In Section~\ref{sec:delegation} we introduce the main concepts and techniques used in our algorithms, namely {\em delegation}, {\em dismountability}, and {\em $k$-hop dismountability}, whose purpose is to recursively self-reduce the problem to smaller graph instances. These techniques are subsequently combined into a more sophisticated algorithm that successfully computes a temporal spanner of $O(n\log n)$ edges. The first step, presented in Section~\ref{sec:fireworks}, is called {\em fireworks}. It is based on a generalization of delegation called {\em one-sided delegation} and results in a spanner of density (essentially) ${n \choose 2}/2$. Then, in Section~\ref{sec:iter_deleg}, we exploit a particular dichotomy in the structure of the residual instance, which allows us to sparsify the graph down to $O(n \log n)$ edges. In Section~\ref{sec:polynomial}, we review the main components of the algorithm, showing that its running time complexity is polynomial. Finally, we present a few open questions in Section~\ref{sec:tightness} and conclude with some remarks in Section~\ref{sec:conclusion}.

\section{Definitions and Basic Results}
\label{sec:preliminaries}

\subsection{Model and Definitions}
\label{sec:model}
Let $G=(V,E)$ be an undirected graph on $n=|V|$ vertices and $\lambda:E\to 2^{\mathbb{N}}$ a mapping that assigns to every edge of $E$ a non-empty set of integer labels. These labels can be seen as discrete times when the edge is present. In this paper, we refer to the resulting graph $\G=(G,\lambda)$ as a {\em temporal graph} (other models and terminologies exist, many of them being equivalent for the considered problem). If $\lambda$ is single-valued and locally injective (\ie adjacent edges have different labels), then we say that $\lambda$ is {\em simple}, and by extension, a temporal graph is simple if its labelling is simple.

A temporal path in $\G$ (also called {\em journey}), is a finite sequence of $k$ triplets ${\cal J}=\{(u_i, u_{i+1}, t_i)\}$ such that $(u_1,\dots,u_{k+1})$ is a path in $G$ and for all $1 \le i \leq k$, $\{u_i, u_{i+1}\} \in E$, $t_i \in \lambda(\{u_i, u_{i+1}\})$ and $t_{i+1} \ge t_{i}$. Strict temporal paths (strict journeys) are defined analogously by requiring that $t_{i+1} > t_{i}$. We say that a vertex $u$ can {\em reach} a vertex $v$ iff a journey exists from $u$ to $v$ (strict or non-strict, depending on the context). 
If every vertex can reach every other vertex, then $\G$ is {\em temporally connected}. Finally, observe that the distinction between strict and non-strict journeys does not exist in simple temporal graphs, as all the journeys are strict.

In general, one can define a {\em temporal spanner} of a temporally connected graph $\G=((V,E),\lambda)$ as a temporally connected spanning subgraph $\G'=((V',E'),\lambda')$ such that $V=V'$, $E' \subseteq E$ and $\lambda':E' \to 2^{\mathbb{N}}$ with $\lambda'(e) \subseteq \lambda(e)$ for all $e \in E'$.
Observe that, if $\G$ is simple, then the structure of a spanner of $\G$ is fully determined by the chosen subset of edges $E' \subseteq E$ and the temporal connectivity of the spanner is established by the labels on the edges in $E'$.
Thus, in such cases, we say that $E'$ itself {\em is} the spanner. Many of these notions are analogous to the ones considered in~\cite{AGMS17,AF16,KKK02}, although they are not referred to as ``spanners'' in these works. 

Finally, when the graph $G$ underlying a temporal graph $\G=(G,\lambda)$ is a complete graph, we call $\G$ a temporal clique. An example of a temporal spanner of a {\em simple temporal clique} is shown in Figure~\ref{fig:example}. The endpoints of the edge labelled 1 are temporally connected in the spanner by the journey with labels 4 and 6 in one direction and the journey with labels 3 and 7 in the other direction.

\begin{figure}[htb]

\centering
\newcommand{\hexafull}{
   \begin{tikzpicture}[scale=2]
     \path (0,0) coordinate (m);
     \path (m)+(90:1) node[defnode] (a){};
     \path (m)+(30:1) node[defnode] (b){};
     \path (m)+(330:1) node[defnode] (c){};
     \path (m)+(270:1) node[defnode] (d){};
     \path (m)+(210:1) node[defnode] (e){};
     \path (m)+(150:1) node[defnode] (f){};
     \tikzstyle{every node}=[circle,fill=white,inner
sep=1.1pt,font=\footnotesize]
     \tikzstyle{every path}=[shorten >= 5pt, shorten <= 5pt]
     \tikzstyle{chain}=[very thick, shorten <= 5pt, shorten >= 5pt]
     \draw (a) edge[chain] node[pos=.42]{5}(c);
     \draw (a) edge[chain] node[pos=.38]{3}(d);
     \draw (a) edge[chain] node[pos=.58]{6}(e);
     \draw (a) edge[chain] node[pos=.5]{4}(f);
     \draw (a) edge[chain] node[pos=.5]{7}(b);
     \draw (b) edge[chain] node[pos=.5]{1}(c);
     \draw (c) edge[chain] node[pos=.5]{2}(d);
     \draw (c) edge[chain] node[pos=.42]{3}(e);
     \draw (c) edge[chain] node[pos=.38]{6}(f);
     \draw (b) edge[chain] node[pos=.38]{8}(e);
     \draw (c) edge[chain] node[pos=.38]{7}(f);
     \draw (d) edge[chain] node[pos=.42]{10}(f);
     \draw [dashed, gray](b)-- node[pos=.58]{9}(f);
     \draw [dashed, gray](b)-- node[pos=.42]{4}(d);
     \draw [dashed, gray](d)-- node[pos=.5]{7}(e);
     \draw [dashed, gray](e)-- node[pos=.5]{1}(f);
   \end{tikzpicture}
}
\readlist\arg{10 7 3 8 6 0 5 12 13 2 4 11 9 14 1}
\hexafull
\caption{\label{fig:example} Example of a simple temporal clique and
  one of its temporal spanners (edges in bold). This spanner is not minimum (nor even minimal) and the reader may try to remove further edges.}
\end{figure}

\subsection{Generality of Simple Labellings}
We claimed in Section~\ref{sec:contributions} that if non-strict journeys are allowed, then one can transform a temporal clique $\G=(G,\lambda)$ with {\em unrestricted} labelling $\lambda$ into a clique $\HH=(G,\lambda_H)$ with {\em simple} labelling $\lambda_H$ such that any valid temporal spanner of $\HH$ induces a valid temporal spanner of $\G$. (As explained, the converse is false, but this is not a problem.) The reduction is in two steps. First, for every edge $e$, restrict $\lambda(e)$ to a single label chosen arbitrarily from $\lambda(e)$. Then in the resulting temporal clique, whenever $k$ edges incident to the same vertex have the same label $l$, these edges are relabelled with distinct values in the interval $[l, l+k-1]$ (arbitrarily) and the values of all the other labels in the graph that are larger than $l$ are increased by $k$. It is not difficult to see that if a journey exists in $\HH$, then the same sequence of edges induces a (possibly non-strict) journey in $\G$.

As for the second claim of the introduction, if strict journeys are the only ones allowed, then as explained, we require the existence of a {\em simple} sub-labelling $\lambda'$ of $\lambda$ (whose computation is not discussed). Here, it is even more direct that any journey based on the sub-labelling $\lambda'$ is available in the original instance.
Based on these arguments, the rest of the paper focuses on simple temporal cliques, and we sometimes drop the adjective ``simple'' when it is clear from the context.

\subsection{Basic Techniques}
\label{sec:preliminary}

We present here two basic sparsification techniques. The first (subcliques) is from previous works, and the second (pivotability) is original. Although these techniques are not directly used in the rest of the paper, they serve two main purposes. Firstly, both techniques are simple and can serve as a gentle warm-up for the reader. Secondly, the pivotability technique is a natural analog of Kosaraju's principle, which gives linear-size spanners in static directed graphs. However, we show that pivotability does not always work in temporal graphs, by presenting an infinite family of graphs that are not pivotable. This negative result thus motivates (and legitimates) the more sophisticated techniques presented in the rest of the paper.

\subsubsection{The Subclique Technique}
So far, the only existing approach for sparsifying simple temporal cliques is that of Akrida {\it et al.}~\cite{AGMS17}, who prove that one can always remove $\lfloor n/4 \rfloor$ edges without breaking temporal connectivity. Their approach is as follows. First, it is established that if $n=4$, then it is always possible to remove at least {\em one} edge. Then, as $n\to \infty$, one can arbitrarily partition the input clique into (essentially) $n/4$ subcliques of $4$ vertices each, and remove an edge from each subclique. The edges {\em between} subcliques are kept, thus the impact of each removal is limited to the corresponding subclique and the resulting graph is temporally connected. Before moving to other techniques, let us observe that this technique can be greatly improved as follows. 

\begin{observation}[Improving the subclique technique]
  \normalfont
  The type of partitioning used in~\cite{AGMS17} is {\em vertex-disjoint}, but it turns out that the same argument can be applied to {\em edge-disjoint} subcliques, with significant consequences. Indeed, by Wilson's Theorem~\cite{Wilson75}, the number of edge-disjoint cliques on $4$ vertices in a complete graph on $n$ vertices is $\lfloor n^2/12\rfloor$. (More generally, as $c\to 1$, graphs with minimum degree at least $cn$ have $\lfloor{n \choose 2} / {k \choose 2}\rfloor$ disjoint copies of $K_k$ for all $k$.) Therefore, one can remove $\lfloor n^2/12\rfloor=\Theta(n^2)$ edges.
\end{observation}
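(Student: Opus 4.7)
The plan is to verify that the one-edge-per-subclique removal of Akrida {\it et al.}~\cite{AGMS17} composes cleanly across an edge-disjoint packing of $K_4$'s rather than only a vertex-disjoint partition, and then to invoke Wilson's theorem to count such packings. The starting point is the base lemma of~\cite{AGMS17}: in any simple temporal clique on four vertices, there is at least one edge whose removal leaves the four-vertex clique temporally connected via its remaining five edges. Call such an edge a \emph{sacrificial edge} and the surviving paths between its endpoints its \emph{replacement journeys}; crucially, these use only edges internal to the four-vertex clique.

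Next, I would fix an edge-disjoint packing $\mathcal{P}$ of copies of $K_4$ in the underlying complete graph $G$. By Wilson's theorem~\cite{Wilson75}, $|\mathcal{P}| = \lfloor n^2/12 \rfloor$ up to a low-order residue of leftover edges and vertices. For each $S \in \mathcal{P}$, apply the base lemma inside the temporal subclique induced by $S$ to select a sacrificial edge $e_S$, and set $E' = E \setminus \{e_S : S \in \mathcal{P}\}$; edge-disjointness guarantees that the removed edges are pairwise distinct, so $|E \setminus E'| = \lfloor n^2/12 \rfloor$. To argue that $(V, E', \lambda|_{E'})$ remains temporally connected, I would use a hop-by-hop substitution: given any pair $u,v$ and a journey $J$ from $u$ to $v$ in $\G$, replace every hop of $J$ that traverses some sacrificial edge $e_S$ with a replacement journey inside $S$. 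Since edge-disjointness implies that the five other edges of $S$ survive in $E'$ with their original labels, each substitution is available inside the sparsified graph, and concatenation yields a valid journey from $u$ to $v$.

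The main obstacle is the \emph{time-window compatibility} of the substitutions: if the $e_S$ hop appears in $J$ at time $t$, the replacement journey inside $S$ must begin no earlier than the time of the preceding hop of $J$ and end no later than the time of the following hop. The base lemma of~\cite{AGMS17} only ensures that \emph{some} replacement journey exists; a priori, its times may not lie in this window. I would resolve this by strengthening the base lemma to a \emph{windowed} form asserting that, for every $t \in \lambda(e_S)$, a replacement journey exists whose times lie in an appropriate window relative to $t$. The original proof of the four-vertex case proceeds by inspection of the possible label orderings, and a small refinement of this analysis should yield the needed temporal flexibility. Once this is in place, the $\lfloor n^2/12 \rfloor = \Theta(n^2)$ bound is immediate, and the parenthetical generalization to edge-disjoint copies of $K_k$ follows by replacing Wilson's theorem with the corresponding packing theorems for $K_k$ (or Keevash-type decomposition results under minimum-degree conditions) together with a $k$-vertex analogue of the base lemma.
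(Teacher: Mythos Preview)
Your packing-and-counting outline is fine, but the temporal-connectivity step is overcomplicated, and the fix you propose for the resulting difficulty is both unnecessary and probably false. You try to repair an \emph{arbitrary} journey $J$ hop by hop, which forces you to fit each replacement journey inside the time window determined by the neighbouring hops of $J$. You then posit a ``windowed'' strengthening of the four-vertex lemma to make this work. But the replacement journeys use the five fixed remaining labels of $S$; there is no reason their time span should be squeezable into an arbitrary window $[t_{\mathrm{prev}},t_{\mathrm{next}}]$ around $\lambda(e_S)$, and a little experimentation with orderings shows it generally cannot. So the step ``a small refinement of this analysis should yield the needed temporal flexibility'' is a genuine gap.

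The point you are missing is that in a clique you never need multi-hop journeys as your baseline. For any pair $u,v$, look at the \emph{direct edge} $\{u,v\}$. Edge-disjointness of the packing means $\{u,v\}$ lies in at most one $K_4$ of $\mathcal P$, so it is removed only if it is the sacrificial edge $e_S$ of that unique $S$; in that case the other five edges of $S$ belong to no other member of $\mathcal P$ and hence survive, and the base lemma gives a journey from $u$ to $v$ entirely inside $S$. If $\{u,v\}$ is not sacrificial, it is still present and is itself a one-hop journey. Either way there is no preceding or following hop to be compatible with, and the time-window issue never arises. This is exactly the ``same argument'' the paper alludes to: the impact of each removal is confined to the unique subclique containing that edge, just as in the vertex-disjoint case.
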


Although this technique allows us to remove $\Theta(n^2)$ edges, it seems unlikely that purely {\em structural} techniques like this one will lead to spanners of $o(n^2)$ edges. The techniques we develop in this paper are different in essence and consider the interplay of timestamps at a finer scale.

\subsubsection{The Pivotability Technique}

Another natural approach that one might think of is inspired by Kosaraju's principle for testing strong connectivity in a directed graph (see~\cite{AHU}). This principle relies on finding a vertex that all the other vertices can reach (through directed paths) and that can reach all these vertices in return. This condition is sufficient in standard graphs because paths are transitive. In the temporal setting, transitivity does not hold, but we can define a temporal analogue as follows. A \emph{pivot vertex} $p$ is a vertex such that all other vertices can reach $p$ by some time $t$ (through journeys) and $p$ can reach all other vertices {\em after} time $t$. The union of the tree of (incoming) journeys towards $p$ and the tree of (outgoing) journeys from $p$ forms a temporal spanner with at most $2(n-1)$ edges. Such a spanner is illustrated in Figure~\ref{fig:pivot}. 
Experiments suggest that pivot vertices exist asymptotically almost surely in random temporal cliques (where an instance corresponds to a random permutation of the edge labels $\{0, 1, 2, \ldots,{n \choose 2}-1\}$.
Unfortunately, arbitrarily large non-pivotable cliques exist, as shown next.

\begin{figure}[htb]
\centering
          \begin{tikzpicture}[scale=1.8]
    \path (0,0) coordinate (m);
    \tikzstyle{every node}=[circle,fill=white,inner sep=1.3pt]
    \path (m)+(90:1) node[defnode, fill=gray] (a){};    
    \draw (m)+(90:1.15) node[invisible]{$p$};
    \path (m)+(90-72:1) node[defnode] (b){};
    \path (m)+(90-144:1) node[defnode] (c){};
    \path (m)+(90+144:1) node[defnode] (d){};
    \path (m)+(90+72:1) node[defnode] (e){};
    
    \tikzstyle{every node}=[circle,fill=white,inner sep=1.3pt]
    \tikzstyle{every path}=[shorten >= 5pt, shorten <= 5pt]
    \tikzstyle{chain}=[very thick, shorten <= 5pt, shorten >= 5pt]
    \draw[->] (a) edge[chain, red] node[pos=.5]{5}(b);
    \draw [dashed, gray](a)-- node[pos=.5]{7}(c);
    \draw[<-] (a) edge[chain, green] node[pos=.5]{2}(d);
    \draw[<-] (a) edge[chain, green] node[pos=.5]{3}(e);
    \draw[->] (b) edge[chain, red] node[pos=.5]{8}(c);
    \draw[->] (b) edge[chain, green] node[pos=.5]{1}(d);
    \draw[->] (b) edge[chain, red] node[pos=.5]{6}(e);
    \draw[->] (c) edge[chain, red] node[pos=.5]{9}(d);
    \draw[->] (c) edge[chain, green] node[pos=.5]{0}(e);
    \draw [dashed, gray](d)-- node[pos=.5]{4}(e);

  \end{tikzpicture}
     \caption{\label{fig:pivot} Example of a pivotable graph. The (light) green edges belong to the tree of incoming journeys to pivot vertex $p$ (with $t=4$); the (darker) red edges belong to the tree of outgoing journeys; the dashed edges belong to neither.
}
 \end{figure}

\paragraph{Non-pivotable Graphs.}
 \label{apx:non-pivotable}

Here, we explain how to construct non-pivotable simple temporal cliques of arbitrary sizes. 
The construction ensures that there is a time $t$ before which no vertex can be reached by all vertices, and after which no vertex can reach all vertices. The choice of $t$ does not matter, as moving it forward or backward could only worsen one of the directions. Thus, the simple existence of such a $t$ rules out the existence of a pivot vertex.
The construction is first presented with respect to the $6$-vertex clique of Figure~\ref{fig:no_pivot_family}; then we explain how to generalize it. 
\begin{figure}[htb]
\centering
  \begin{tikzpicture}[scale=1.8]
    \path (0,0) coordinate (m);
    \tikzstyle{every node}=[circle,fill=white,inner sep=1.3pt]
    \draw (m)+(90:1.2) node[invisible]{$u$};
    \path (m)+(90:1) node[defnode] (a){};
    \draw (m)+(30:1.2) node[invisible]{$v$};
    \path (m)+(30:1) node[defnode] (b){};
    \draw (m)+(-30:1.2) node[invisible]{$w$};
    \path (m)+(330:1) node[defnode] (c){};
    \path (m)+(270:1) node[defnode] (d){};
    \path (m)+(210:1) node[defnode] (e){};
    \path (m)+(150:1) node[defnode] (f){};
    \tikzstyle{every node}=[circle,font=\footnotesize,fill=white,inner sep=.8pt]
    \tikzstyle{every path}=[shorten >= 5pt, shorten <= 5pt]
    \tikzstyle{chain}=[ultra thick, shorten <= 5pt, shorten >= 5pt]
    \draw (a) edge[chain] node[pos=.5]{0}(b);
    \draw (a) edge[dashed, gray] node[pos=.42]{}(c);
    \draw (a) edge[chain] node[pos=.38]{5}(d);
    \draw (a)edge[chain] node[pos=.58]{6}(e);
    \draw (a)edge[chain] node[pos=.5]{2}(f);
    \draw (b) edge[chain] node[pos=.5]{1}(c);
    \draw (b)edge[dashed, gray] node[pos=.42]{}(d);
    \draw (b)edge[dashed, gray] node[pos=.38]{}(e);
    \draw (b)edge[dashed, gray] node[pos=.58]{}(f);
    \draw (c) edge[dashed, gray] node[pos=.5]{}(d);
    \draw (c)edge[dashed, gray] node[pos=.42]{}(e);
    \draw (c)edge[dashed, gray] node[pos=.38]{}(f);
    \draw (d)edge[chain] node[pos=.5]{4}(e);
    \draw (d)edge[chain] node[pos=.42]{7}(f);
    \draw (e) edge[chain] node[pos=.5]{3}(f);

  \end{tikzpicture}
  \qquad
  \begin{tikzpicture}[scale=1.8]
    \path (0,0) coordinate (m);
    \tikzstyle{every node}=[circle,fill=white,inner sep=1.3pt]
    \draw (m)+(90:1.2) node[invisible]{$u$};
    \path (m)+(90:1) node[defnode] (a){};
    \draw (m)+(30:1.2) node[invisible]{$v$};
    \path (m)+(30:1) node[defnode] (b){};
    \draw (m)+(-30:1.2) node[invisible]{$w$};
    \path (m)+(330:1) node[defnode] (c){};
    \path (m)+(270:1) node[defnode] (d){};
    \path (m)+(210:1) node[defnode] (e){};
    \path (m)+(150:1) node[defnode] (f){};
    \tikzstyle{every node}=[circle,font=\footnotesize,fill=white,inner sep=.8pt]
    \tikzstyle{every path}=[shorten >= 5pt, shorten <= 5pt]
    \tikzstyle{chain}=[ultra thick, shorten <= 5pt, shorten >= 5pt]
    \draw (a) edge[dashed, gray](b);
    \draw (a) edge[chain] node[pos=.42]{11}(c);
    \draw (a) edge[dashed, gray] (d);
    \draw (a)edge[dashed, gray](e);
    \draw (a)edge[dashed, gray](f);
    \draw (b) edge[dashed, gray](c);
    \draw (b)edge[chain] node[pos=.42]{8}(d);
    \draw (b)edge[chain] node[pos=.38]{9}(e);
    \draw (b)edge[chain] node[pos=.58]{10}(f);
    \draw (c) edge[chain] node[pos=.5]{14}(d);
    \draw (c)edge[chain] node[pos=.42]{13}(e);
    \draw (c)edge[chain] node[pos=.38]{12}(f);
    \draw (d)edge[dashed, gray] (e);
    \draw (d)edge[dashed, gray](f);
    \draw (e) edge[dashed, gray](f);
  \end{tikzpicture}
  
\caption{\label{fig:no_pivot_family}Example of a non-pivotable clique seen as the union of two specific subgraphs which represent the periods $[0,7]$ (left) and $[8,14]$ (right).}
\end{figure}
Thus, let $n=6$. In this case, let $t=7$ and let us consider the two periods $[0,7]$ and $[8,14]$. Looking at the graph, observe that none of the vertices can be reached by all the others during the first period. This is true because (1) the only vertex that $w$ can reach is $v$, making $v$ the only candidate, and (2) none of the other vertices (except $u$) can reach~$v$. Similarly, none of the vertices can reach all the others in the second period. This is true because (1) $u$ can only be reached by $w$, making $w$ the only candidate, and (2) $w$ cannot reach $v$ in the second period. 

The construction can be generalized to any larger value of $n$ by choosing three vertices to play the same role as $u$, $v$, and $w$. The graph of the first period corresponds to a subclique on $n-2$ vertices (including $u$, but not $v$ and $w$), plus the two edges $\{u,v\}$ and $\{v, w\}$. Thus $t={n-2 \choose 2} + 1$. The labelling assigns $0$ to $\{u,v\}$, $1$ to $\{v,w\}$, and all values in $[2,t]$ to the edges of the subclique (arbitrarily). By the same argument as above, none of the vertices can be reached by all others during the first period. As for the second period, the labelling must ensure that all the edges of $w$ have a larger label than all the edges of $v$, and that $\{u,w\}$ is assigned the smallest label among the edges incident to $w$, which results in direct applicability of the same argument as above. Thus, none of the vertices can reach all the others during the second period.


\section{Delegation and Dismountability}
\label{sec:delegation}

This section introduces a number of basic techniques which are subsequently used (and extended) in Sections~\ref{sec:fireworks} and~\ref{sec:iter_deleg}.
Given a vertex $v$, we use $e^-(v)$ to denote the edge with smallest label incident to $v$, and $e^+(v)$ analogously for the largest label.

\begin{lemma}
  \label{fact:minus}
  Given a temporal clique $\G$, if $\{u,v\}=e^-(v)$, then $u$ can reach all other vertices through $v$. Symmetrically, if $\{u,w\}=e^+(w)$, then all vertices can reach $u$ through $w$.
\end{lemma}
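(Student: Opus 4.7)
The plan is to prove both parts by exhibiting an explicit two-hop journey, relying on the fact that the labeling is simple (locally injective), which is the standing assumption since Section~2.2 reduces everything to simple temporal cliques.

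For the first claim, let $t = \lambda(\{u,v\})$. I will argue that for every vertex $x \notin \{u,v\}$, the two-edge sequence $(u,v,t),(v,x,\lambda(\{v,x\}))$ is a valid journey. The edge $\{v,x\}$ exists because $G$ is complete, and since $\{u,v\} = e^-(v)$, we have $\lambda(\{v,x\}) \ge t$. Because $\lambda$ is simple and $\{v,x\}$ and $\{u,v\}$ are both incident to $v$, the inequality is strict: $\lambda(\{v,x\}) > t$. Hence the two hops form a journey (in fact a strict one) from $u$ to $x$ passing through $v$.

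The symmetric claim follows by the time-reversed version of the same argument: let $t = \lambda(\{u,w\})$ and take any vertex $x \notin \{u,w\}$. Since $\{u,w\} = e^+(w)$ and $\lambda$ is simple at $w$, we have $\lambda(\{x,w\}) < t$, so the sequence $(x,w,\lambda(\{x,w\})),(w,u,t)$ is a valid (strict) journey from $x$ to $u$ through $w$. The cases $x = v$ (respectively $x = w$) are trivial since $u$ already has the single-edge journey $(u,v,t)$ (respectively $(w,u,t)$) available.

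There is no real obstacle here: the only subtlety worth flagging explicitly in the write-up is the use of local injectivity to upgrade $\ge$ to $>$, which is exactly what allows the second hop to be taken strictly after the first. This is why the statement is naturally phrased for simple temporal cliques, and it is the reason the reductions of Section~2.2 were set up beforehand.
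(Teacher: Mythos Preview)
Your proof is correct and follows essentially the same approach as the paper's: both argue that since $\{u,v\}$ carries the smallest label at $v$, every other edge at $v$ is available afterwards, so $u$ reaches any $x$ via the two-hop journey through $v$ (and symmetrically for $e^+(w)$). Your write-up is simply more explicit, in particular spelling out that local injectivity upgrades $\ge$ to $>$, which the paper leaves implicit.
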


\begin{proof}
  If $\{u,v\}=e^-(v)$, then $v$ has an edge with every other vertex after that time, thus a journey exists from $u$ to every vertex through $v$. A symmetrical argument applies in the second case.
\end{proof}

Observe that Lemma~\ref{fact:minus} holds only when the underlying graph is complete. This property makes it possible for a vertex $u$ to {\em delegate} its emissions to a vertex $v$, \ie exploit the fact that $v$ can still reach all the other vertices {\em after} interacting with $u$, thus none of $u$'s other edges are required for reaching the other vertices. By a symmetrical argument, $u$ can delegate its receptions to a vertex $w$ if $w$ can be reached by all the other vertices {\em before} interacting with $u$, so the other edges of $u$ are not needed for being reached by other vertices.

\subsection{Dismountability}
The delegation concept suggests an interesting technique to construct temporal spanners. We say that a vertex $u$ in a temporal clique $\G$ is \emph{dismountable} if there exist two other vertices $v$ and $w$ such that $\{u, v\} = e^-(v)$ and $\{u, w\} = e^+(w)$, i.e., $u$ can delegate both its emissions and its receptions. The existence of such a vertex enables a self-reduction of the spanner construction as follows: 
select $e^-(v)$ and $e^+(w)$ for future inclusion in the spanner, then recurse on the smaller clique $\G[V\setminus u]$, as illustrated in Figure~\ref{fig:full_dismountability}. More precisely:

\begin{theorem}[Dismountability]
  \label{thm:dismountability}
  Let $\G$ be a temporal clique, and let $u, v, w$ be three vertices in $\G$ such that $\{u, v\} = e^-(v)$ and $\{u, w\} = e^+(w)$. Let $S'$ be a temporal spanner of $\G[V\setminus u]$. Then $S=S'\cup \{\{u, v\}, \{u, w\}\}$ is a temporal spanner of $\G$.
\end{theorem}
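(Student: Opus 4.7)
The plan is to verify that every ordered pair of vertices in $V$ is temporally connected in $S$, by splitting into cases according to whether an endpoint equals $u$. When both endpoints lie in $V \setminus u$, temporal connectivity is immediate from the hypothesis that $S'$ is a spanner of $\G[V \setminus u]$. The two remaining cases are reaching $u$ from some $x \in V \setminus u$, and reaching some $x \in V \setminus u$ from $u$. I would handle them symmetrically using the two edges $\{u,v\}$ and $\{u,w\}$ added to the spanner.

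For the ``$u$ to $x$'' direction, the plan is to take the journey $J$ from $v$ to $x$ guaranteed by $S'$ and prepend the edge $\{u,v\}$ with its label $\lambda(\{u,v\})$. I would then argue that this concatenation is a valid journey. The crucial observation is that $J$ necessarily starts with some edge $e$ incident to $v$, and since $\{u,v\} = e^-(v)$ is the incident edge of $v$ with the smallest label in all of $\G$, every other edge incident to $v$ (in particular $e$, which lies in $\G[V\setminus u]$) has a strictly larger label by simplicity of the labeling. Therefore the starting time of $J$ is strictly greater than $\lambda(\{u,v\})$, so prepending $\{u,v\}$ yields a journey from $u$ to $x$ whose times are non-decreasing (indeed strictly increasing at the junction). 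The ``$x$ to $u$'' direction is handled by the mirror-image argument: take the journey $J'$ from $x$ to $w$ in $S'$, and append $\{u,w\}$. Its last edge is incident to $w$ and therefore has a label strictly smaller than $\lambda(\{u,w\}) = e^+(w)$, so the concatenation is valid.

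The main (and essentially only) obstacle is verifying that the splicing of journeys respects the time-monotonicity requirement at the junction edge. This is where the hypotheses $\{u,v\} = e^-(v)$ and $\{u,w\} = e^+(w)$ play the decisive role, together with the convention of simple labelings established in Section~\ref{sec:preliminaries}: the strict inequality on labels at $v$ (respectively $w$) guarantees that the prepending (respectively appending) is legal for both strict and non-strict journeys uniformly. Once this junction is justified, the rest is a direct case analysis, and I would conclude by remarking that the two corner cases ($x = v$ in the first direction and $x = w$ in the second) need no separate treatment, since the single edge $\{u,v\}$ (resp.\ $\{u,w\}$) is itself a journey of length one.
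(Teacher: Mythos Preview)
Your proof is correct and follows essentially the same approach as the paper's: prepend $\{u,v\}$ to any $v\to x$ journey in $S'$ (valid because $\{u,v\}=e^-(v)$ forces every edge of $S'$ incident to $v$ to have a larger label), and symmetrically append $\{u,w\}$ to any $x\to w$ journey. The paper's version is terser, but your explicit case analysis and treatment of the junction inequality are exactly the argument it has in mind.
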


\begin{proof}
  Since $\{u,v\}=e^-(v)$, all edges incident to $v$ in $S'$ have a larger label than $\{u,v\}$, thus $u$ can reach all the vertices through $v$ using only the edges of $S'$. A symmetrical argument implies that all vertices in $\G$ can reach $u$ through $w$ using only $\{u,w\}$ and the edges of $S'$.
\end{proof}

We call a graph \emph{dismountable} if it contains a dismountable vertex. It is said to be {\em fully dismountable} if one can find an ordering of $V$ that allows for a recursive dismounting of the graph until the residual instance has two vertices. 
An example of fully dismountable graph is given in Figure~\ref{fig:full_dismountability}.

\begin{figure}[htb]
\centering
     \begin{subfigure}[b]{0.22\textwidth}
          \centering
          \resizebox{\linewidth}{!}{
          \centering
          \begin{tikzpicture}[scale=2.5]
    \path (0,0) coordinate (m);
    \tikzstyle{every node}=[circle,fill=white,inner sep=1.3pt]
    \path (m)+(90:1) node[defnode, fill=gray] (a){};
    \path (m)+(90-72:1) node[defnode] (b){};
    \path (m)+(90-144:1) node[defnode] (c){};
    \path (m)+(90+144:1) node[defnode] (d){};
    \path (m)+(90+72:1) node[defnode] (e){};
    
    \tikzstyle{every node}=[circle,fill=white,inner sep=1.3pt,font=\LARGE]
    \tikzstyle{every path}=[thin,gray,shorten >= 5pt, shorten <= 5pt]
    \tikzstyle{chain}=[ultra thick, black,shorten <= 5pt, shorten >= 5pt]
    \draw (a) edge[chain] node[pos=.5]{9}(e);
    \draw (a)-- node[pos=.5]{0}(b);
    \draw (a) edge[chain] node[pos=.5]{3}(c);
    \draw (a)-- node[pos=.5]{2}(d);
    \draw (b) --node[pos=.5]{4}(c);
    \draw (b) -- node[pos=.5]{6}(d);
    \draw (b) -- node[pos=.5]{1}(e);
    \draw (c) -- node[pos=.5]{5}(d);
    \draw (c)-- node[pos=.5]{7}(e);
    \draw (d) --node[pos=.5]{8}(e);

  \end{tikzpicture}
          }  
     \end{subfigure}
     \begin{subfigure}[b]{0.22\textwidth}
          \centering
          \resizebox{\linewidth}{!}{
          \centering
          \begin{tikzpicture}[scale=2.5]
    \path (0,0) coordinate (m);
    \tikzstyle{every node}=[circle,fill=white,inner sep=1.3pt]
    \path (m)+(90-72:1) node[defnode] (b){};
    \path (m)+(90-144:1) node[defnode] (c){};
    \path (m)+(90+144:1) node[defnode] (d){};
    \path (m)+(90+72:1) node[defnode, fill=gray] (e){};
    
    \tikzstyle{every node}=[circle,fill=white,inner sep=1.3pt,font=\LARGE]
    \tikzstyle{every path}=[thin,gray,shorten >= 5pt, shorten <= 5pt]
    \tikzstyle{chain}=[ultra thick, black,shorten <= 5pt, shorten >= 5pt]
    \draw (b) --node[pos=.5]{4}(c);
    \draw (b) -- node[pos=.5]{6}(d);
    \draw (b) edge[chain] node[pos=.5]{1}(e);
    \draw (c) -- node[pos=.5]{5}(d);
    \draw (c) edge[chain] node[pos=.5]{7}(e);
    \draw (d) --node[pos=.5]{8}(e);

  \end{tikzpicture}
          }  
     \end{subfigure}
     \begin{subfigure}[b]{0.2\textwidth}
          \resizebox{\linewidth}{!}{
          \begin{tikzpicture}[scale=3]
    \path (0,0) coordinate (m);
    \tikzstyle{every node}=[circle,fill=white,inner sep=1.3pt]
    \path (m)+(90-72:1) node[defnode, fill=gray] (b){};
    \path (m)+(90-144:1) node[defnode] (c){};
    \path (m)+(90+144:1) node[defnode] (d){};
    
    \tikzstyle{every node}=[circle,fill=white,inner sep=1.3pt,font=\LARGE]
    \tikzstyle{every path}=[thin,gray,shorten >= 5pt, shorten <= 5pt]
    \tikzstyle{chain}=[ultra thick, black,shorten <= 5pt, shorten >= 5pt]
    \draw (b) edge[chain] node[pos=.5]{4}(c);
    \draw (b) edge[chain] node[pos=.5]{6}(d);
    \draw (c) edge node[pos=.5]{5}(d);

  \end{tikzpicture}
          }  
          \label{fig:K3}
     \end{subfigure}
     \hspace{20pt}
     \begin{subfigure}[b]{0.22\textwidth}
          \resizebox{\linewidth}{!}{
          \begin{tikzpicture}[scale=2.5]
    \path (0,0) coordinate (m);
    \tikzstyle{every node}=[circle,fill=white,inner sep=1.3pt]
    \path (m)+(90:1) node[defnode] (a){};
    \path (m)+(90-72:1) node[defnode] (b){};
    \path (m)+(90-144:1) node[defnode] (c){};
    \path (m)+(90+144:1) node[defnode] (d){};
    \path (m)+(90+72:1) node[defnode] (e){};
    
    \tikzstyle{every node}=[circle,fill=white,inner sep=1.3pt,font=\LARGE]
    \tikzstyle{every path}=[shorten >= 5pt, shorten <= 5pt]
    \tikzstyle{chain}=[ultra thick, shorten <= 5pt, shorten >= 5pt]
    \draw [dashed, gray](a) -- node[pos=.5]{0}(b);
    \draw (a)edge[chain] node[pos=.5]{3}(c);
    \draw [dashed, gray](a)-- node[pos=.5]{2}(d);
    \draw (a) edge[chain] node[pos=.5]{9}(e);
    \draw (b) edge[chain] node[pos=.5]{4}(c);
    \draw (b) edge[chain] node[pos=.5]{6}(d);
    \draw (b) edge[chain] node[pos=.5]{1}(e);
    \draw (c) edge[chain] node[pos=.5]{5}(d);
    \draw (c) edge[chain] node[pos=.5]{7}(e);
    \draw [dashed, gray](d) -- node[pos=.5]{8}(e);

  \end{tikzpicture}
          }  
     \end{subfigure}
     \caption{\label{fig:full_dismountability} Example of a fully dismountable graph and the resulting spanner.}
 \end{figure}

\begin{lemma}[Spanners based on dismountability]
  \label{fact:dismounted-spanner}
  If a graph can be fully dismounted, then the resulting spanner will have $2(n-2)+1=2n - 3$ edges.
\end{lemma}

\begin{proof}
Each dismounted vertex contributes two edges to the spanner and the residual instance contributes one edge.
\end{proof}

Unfortunately, there are arbitrarily large temporal cliques which are not fully dismountable as shown in Subsection~\ref{apx:non-dismountable}.

\subsection{$k$-hop Delegation and $k$-hop Dismountability}
\label{subsec:k-hop}
The dismountability technique can be generalized to multi-hop journeys. Let $J$ be a journey from vertex $u$ to vertex $v$ through vertices $u=u_0, u_1, \ldots, u_k=v$ with $\{u_{k-1},u_k\}=e^-(v)$. The key observation is that $u$ can delegate its emissions to $v$ even if $\{u_{i-1},u_i\} \ne e^-(u_i)$ for some $i$.
Indeed, it is sufficient that the {\em last} edge of a journey from $u$ to $v$ is the minimum (at $v$) in order to delegate $u$'s emissions to $v$. Symmetrically, it is sufficient that the {\em first} edge of a journey from $w$ to $u$ is $e^+(w)$ in order to delegate $u$'s receptions to $w$.
Accordingly, a vertex $u$ is called {\em $k$-hop dismountable} if one can find two other vertices $v$ and $w$ (possibly identical if $k> 1$) such that there are journeys of {\em at most} $k$ hops (1) from $u$ to $v$ that arrives at $v$ through $e^-(v)$, and (2) from $w$ to $u$ that leaves $w$ through $e^+(w)$. See Figure~\ref{fig:dismountability}(b) for an illustration.

We call a graph {\em $k$-hop dismountable} if it contains a $k$-hop dismountable vertex. It is said to be {\em fully $k$-hop dismountable} if one can find an ordering of $V$ that allows for a recursive $k$-hop dismounting of the graph until the residual instance has two vertices. Note that $k$-hop dismountability implies $k'$-hop dismountability for $k' > k$ by definition but the converse is not true.

\begin{figure}[htb]
  \centering
\begin{subfigure}[b]{0.30\textwidth}
  \centering
\begin{tikzpicture}[scale=2.3]
    \path (0,0) coordinate (m);
    \draw (0,0) ellipse (1cm and .5cm) node[invisible,anchor=south] {};
    \tikzstyle{every node}=[circle,fill=white,inner sep=1.3pt]
    \path (m)+(-.5,0) node[defnode] (v) {};
    \path (m)+(.5,0) node[defnode] (w) {};
    \path (m)+(0,-.7) node[defnode] (u){};
    \draw (v)+(30:.15) node[invisible,anchor=north] {$v$};
    \draw (u)+(30:.15) node[invisible,anchor=north] {$u$};
    \draw (w)+(30:.15) node[invisible,anchor=north] {$w$};
    \tikzstyle{every node}=[fill=white,inner sep=2pt]
    \tikzstyle{every path}=[shorten >= 1pt, shorten <= 1pt]

    \draw (u)-- node[pos=.7]{\small $e^-(v)$}(v);
    \draw (u)-- node[pos=.7]{\small $e^+(w)$}(w);
  \end{tikzpicture}
\caption{Dismountability principle.}
\end{subfigure}
\begin{subfigure}[b]{0.35\textwidth}
  \centering
\begin{tikzpicture}[scale=1.7]
    \path (0,0) coordinate (m);
    \draw (0,0) ellipse (1.25cm and .75cm) node[invisible,anchor=south] {};
    \tikzstyle{every node}=[circle,fill=white,inner sep=1.3pt]
    \path (m)+(60:.5) node[defnode] (w) {};
    \path (m)+(-.6,-.2) node[defnode] (v) {};
    \path (m)+(-15:.65) node[defnode] (passby) {};
    \path (m)+(270:1) node[defnode] (u){};
    \draw (v) node[right=3pt,invisible] {$v$};
    \draw (u) node[right=3pt,invisible] {$u$};
    \draw (w) node[right=3pt,invisible] {$w$};
    \tikzstyle{every node}=[circle,fill=white,inner sep=0pt]
    \tikzstyle{every path}=[shorten >= 1pt, shorten <= 1pt]
    
    \draw (w)-- node[rectangle,inner sep=.5pt,pos=.42]{\footnotesize $e^+(w)$}(passby);
    \draw (passby)-- node[rectangle,inner sep=.5pt,pos=.35]{\footnotesize $>$$e^+(w)$}(u);
    \draw (v)-- node[rectangle,inner sep=0.5pt,pos=.35]{\footnotesize $e^-(v)$}(u);
  \end{tikzpicture}
\caption{Example of $2$-hop dismountability.}
\end{subfigure}
\caption{\label{fig:dismountability}Illustration of the principle of dismountability and $k$-hop dismountability.}
\end{figure}

Temporal spanners can be obtained in a similar way to $1$-hop dismountability by selecting all of the edges involved in these journeys for inclusion in the spanner. However, only the edges adjacent to the dismounted vertex are removed in the recursion, thus some edges used in a multi-hop journey may be selected several times over the recursion (to our advantage). We can then state an analogous fact for $k$-hop dismountability as follows. 

\begin{lemma}
  \label{fact:k-hop-number}
If a temporal graph $\mathcal{G}$ is fully $k$-hop dismountable, then this process yields a temporal spanner with at most $2k(n-2)+1< 2kn$ edges.
\end{lemma}

\begin{proof}
Each dismounted vertex contributes at most $2k$ edges to the spanner and the residual instance contributes one edge.
\end{proof}

Unfortunately, again there exist arbitrarily large graphs (in fact, even cliques) which are not $k$-hop dismountable (for any $k$), as discussed next.
Nonetheless, $k$-hop dismountability is a core component in the more sophisticated techniques presented in this paper.

\subsection{Non-dismountable Cliques}
\label{apx:non-dismountable}

Here, we explain how to construct an infinite family of simple temporal cliques which are not $k$-hop dismountable for any $k$. Such a clique can be constructed for any even $n \ge 4$.
We construct this family by first partitioning the vertices into two equal parts $V^-$ and $V^+$, and then adding the edges (within and between these parts) using time labels in the set $\{0, 1, 2, ..., {n \choose 2}-1\}$, as follows (see \cref{fig:undism_family}).
First, create a subclique among all vertices of $V^-$ and put the ${n/2} \choose 2$ smallest labels on these edges. Similarly, create a subclique among all vertices of $V^+$ and put the ${n/2} \choose 2$ largest labels on the edges. Then, create two edge-disjoint maximum matchings, denoted $M^-$ and $M^+$, between the two subcliques (so each matching has $\frac{n}{2}$ edges). On the edges of $M^-$, put the $n/2$ smallest remaining labels, and on $M^+$ put the $n/2$ largest remaining labels. 
Add all of the missing edges between $V^-$ and $V^+$ (call this set of edges $E'$) to obtain a clique on $n$ vertices, and label them with the remaining labels.
Thus, in increasing order of labels, we have the edges among the vertices of $V^-$, edges of $M^-$, edges of $E'$, edges of $M^+$, and finally the edges among the vertices of $V^+$. 

\begin{figure}
    \centering
    \includegraphics[scale=.55]{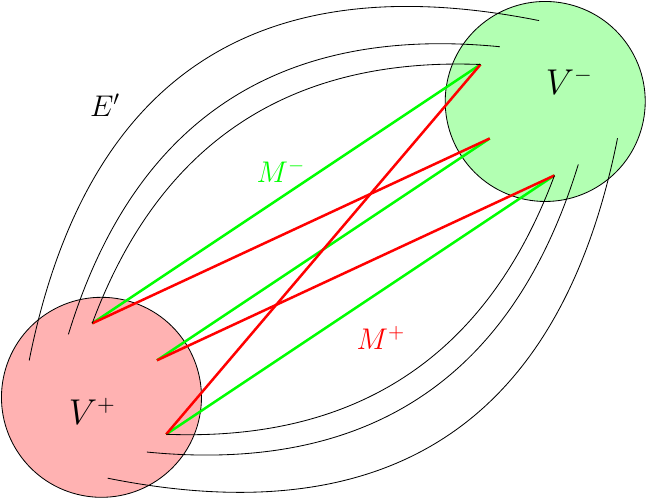}
    \caption{Construction of a non-dismountable graph.}
    \label{fig:undism_family}
\end{figure}

Clearly, no vertex $u \in V^-$ can reach a vertex $v$ using $e^-(v)$, and no vertex $u \in V^+$ can be reached by a vertex $w$ using $e^+(w)$, regardless of how many hops are allowed. Thus, this construction creates a ($k$-hop) non-dismountable temporal graph.

\begin{remark}
The fireworks technique (presented in \cref{sec:fireworks}) always finds a $2$-hop dismountable node if $n$ is odd. Thus the restriction that $n$ is even in the above construction is necessary (at least for $k \geq 2$).
\end{remark}

Finally, observe that the non-dismountable cliques obtained by this construction are pivotable. However, some graphs (and even some cliques) exist which are neither pivotable nor dismountable, as illustrated in Figure~\ref{fig:neither}. Experiments that we have conducted suggest that there exist arbitrarily large graphs which are non-pivotable and non-dismountable, however, we leave the characterization of an infinite family with these properties as an open question.

\begin{figure}[h]
  \centering
  \begin{tikzpicture}[scale=2]                                                
    \path (0,0) node[defnode] (d){};                                            
    \path (0,1) node[defnode] (a){};                                            
    \path (1,0) node[defnode] (c){};                                            
    \path (1,1) node[defnode] (b){};                                            
    \draw (a)-- node[above]{1}(b);                                            
    \draw (a)-- node[left,pos=.4]{4}(c);                                      
    \draw (d)-- node[left]{2}(a);                                             
    \draw (b)-- node[right]{3}(c);                                            
    \draw (b)-- node[right,pos=.4]{5}(d);                                     
    \draw (c)-- node[below]{6}(d);                                         
  \end{tikzpicture}         
  \caption{\label{fig:neither}Example of a temporal clique that is neither dismountable nor pivotable.}
\end{figure}

\section{The Fireworks Technique}
\label{sec:fireworks}

In this section, we present an algorithm called {\em fireworks}, which exploits a system of multi-hop delegations among vertices. In particular, we take advantage of {\em one-sided delegations}, in which a vertex delegates only its emissions, or only its receptions. The combination of many such delegations is shown to lead to the removal of essentially half of the edges of the input clique. The residual instance has a particular structure that is exploited in Section~\ref{sec:iter_deleg} to obtain the final $O(n \log n)$-sparse spanners.

\subsection{Forward Fireworks}
\label{subsec:forward_fw}

The purpose of fireworks is to mutualize several one-sided delegations in a transitive way, so that many vertices do not need to reach the other vertices directly. Given a temporal clique $\G=(G,\lambda)$ with $G=(V,E)$, define the directed graph ${G^-}=(V,E^-)$ such that $(u,v)\in E^-$ iff $\{u,v\} = e^-(v)$, except that, if $e^-(u) = e^-(v)$ for some $u$ and $v$, only one of the arcs is included (chosen arbitrarily).

\newcommand{\EM}{{E^-}}
\newcommand{\EMP}{{E^-}'}

\begin{lemma}
    \label{lem:directed-paths-minus}
	The sequences of labelled edges in $\G$ corresponding to directed paths in ${G^-}$ are journeys in $\G$.
\end{lemma}

\begin{proof}[Proof (by contradiction).]
Let $(u_0, u_1), (u_1, u_2), ..., (u_{k-1}, u_{k})$ be a directed path in ${G^-}$ and suppose that the corresponding path in $\G$ is \emph{not} a journey. Then it must be the case that the label of an edge $(u_{i-1}, u_i)$ is greater than the label on the adjacent edge $(u_i, u_{i+1})$ for some $i$. Then $\{u_{i-1},u_i\} \ne e^-(u_i)$ which is impossible.
\end{proof}

By construction, $E^-$ induces a disjoint set of {\em out-trees} (one source, possibly several sinks). We transform $G^-$ into a disjoint set ${\cal T}^-= (V,E^-_T)$ of {\em in-trees} (one sink, possibly several sources) as follows (see also Figure~\ref{fig:minchain} for an illustration). 
Let $E^-_T$ be initialized as a copy of $E^-$. 
For every $v$ with outdegree at least $2$ in $G^-$, let $(v,u_1), ..., (v, u_{\ell})$ be its out-arcs with $(v, u_{\ell})$ being the one with the {\em largest} label.
For every $i < \ell$, if $u_i$ is a sink vertex, then flip the direction of $(v,u_i)$ in $E^-_T$ (\ie replace $(v,u_i)$ by $(u_i,v)$ in $E^-_T$); otherwise remove $(v,u_i)$ from $E^-_T$. Let ${\cal T}^-= (V,E^-_T)$ be the resulting set of in-trees ${\cal T}_1^-, ..., {\cal T}_k^-$ (containing possibly more in-trees than the number of initial out-trees). 

\begin{figure}[htb]
  \pgfmathsetmacro{\ex}{0}
  \pgfmathsetmacro{\ey}{1}
  \centering
\begin{tikzpicture}[scale=1]
  \tikzstyle{every node}=[defnode]
  \path (3.98,7.68) node [] (v0) {};
  \path (3.92,8.5) node [] (v1) {};
  \path (3.1,8.64) node [emitter] (v2) {};
  \path (3.5,9.5) node [] (v3) {};
  \path (4.28,9.4) node [] (v4) {};
  \path (4.88,8.76) node [emitter] (v5) {};
  \path (4.72,10.24) node [emitter] (v6) {};
  \path (3.76,10.5) node [emitter] (v7) {};
  \path (3.12,10.44) node [emitter] (v8) {};
  \path (2.72,9.84) node [emitter] (v9) {};
  \path (3.3,7.9) node [] (v10) {};
  \path (2.6,8.1) node [emitter] (v11) {};
  \tikzstyle{every node}=[]
  \tikzstyle{every path}=[thick, green,shorten <=2pt,shorten >=2pt];
  \draw [->, very thick, OliveGreen] (v0)--(v1);
  \draw [->] (v1)--(v2);
  \draw [->] (v1)--(v3);
  \draw [->] (v1)--(v4);
  \draw [->, very thick, OliveGreen] (v1)--(v5);
  \draw [->, very thick, OliveGreen] (v4)--(v6);
  \draw [->, very thick, OliveGreen] (v3)--(v7);
  \draw [->] (v3)--(v8);
  \draw [->] (v3)--(v9);
  \draw [->, very thick, OliveGreen] (v10)--(v11);
  
\end{tikzpicture}
~~$\to$~~
\begin{tikzpicture}[scale=1]
  \tikzstyle{every node}=[defnode]
  \path (3.98,7.68) node [] (v0) {};
  \path (3.92,8.5) node [] (v1) {};
  \path (3.1,8.64) node [] (v2) {};
  \path (3.5,9.5) node [] (v3) {};
  \path (4.28,9.4) node [] (v4) {};
  \path (4.88,8.76) node [emitter] (v5) {};
  \path (4.72,10.24) node [emitter] (v6) {};
  \path (3.76,10.5) node [emitter] (v7) {};
  \path (3.12,10.44) node [] (v8) {};
  \path (2.72,9.84) node [] (v9) {};
  \path (3.3,7.9) node [] (v10) {};
  \path (2.6,8.1) node [emitter] (v11) {};
  \tikzstyle{every path}=[thick,green,shorten <=2pt,shorten >=2pt];
  \draw [->, very thick, OliveGreen] (v0)--(v1);
  \draw [<-] (v1)--(v2);
  \draw [->, very thick, OliveGreen] (v1)--(v5);
  \draw [->, very thick, OliveGreen] (v4)--(v6);
  \draw [->, very thick, OliveGreen] (v3)--(v7);
  \draw [<-] (v3)--(v8);
  \draw [<-] (v3)--(v9);
  \draw [->, very thick, OliveGreen] (v10)--(v11);
\end{tikzpicture}

\caption{\label{fig:minchain} Example of transformation from a disjoint set of out-trees $(V,E^-)$ to a disjoint set of in-trees $(V,E^-_T)$. The coloured vertices represent sink vertices, and the darker arrows represent out-arcs with the largest label (for the vertex on the tail side).}
\end{figure}

\begin{lemma}
  \label{fact:Tminus}
The set of in-trees ${\cal T}^-= (V,E^-_T)$ has the following properties:
\begin{enumerate}
\item Directed paths in ${\cal T}^-$ correspond to journeys in $\G$.
\item Every vertex belongs to exactly one tree.
\item Every tree contains at least two vertices.
\item There is a unique sink in each tree.
\item The unique arc incident to a sink $s$ corresponds to $e^-(s)$.
\end{enumerate}
\end{lemma}

\begin{proof}
Each property is proved as follows.
\begin{enumerate}
\item This follows from Lemma~\ref{lem:directed-paths-minus} because an arc $(v,u_i)$ is only replaced by $(u_i,v)$ if the label of $(v,u_i)$ is less than the label of another arc $(v,u_{\ell})$, so $(u_i,v),(v,u_{\ell})$ is a journey in $\G$.
\item The transformation from out-trees to in-trees does not add arcs between trees.
\item Every out-tree contains at least one arc (the arc of its source with minimum label). An out-tree with a single arc is not changed by the transformation. Arcs to sinks are not removed (but they may be flipped). A vertex $v$ that loses arcs during the transformation retains at least its arc with minimum label. In summary, no vertex loses all of its arcs.
\item Each vertex in an in-tree has at most one out-arc.
\item The unique arc to a sink is an arc that was not flipped by the transformation.
\end{enumerate}\vspace{-20pt}
\end{proof}

Observe that some of the journeys induced by the arcs of ${\cal T}^-$ may include intermediate hops where the arc's label is not locally minimum for its head endpoint. However, as already discussed in Section~\ref{subsec:k-hop}, a delegation only requires that the label of the last hop of a journey be locally minimum, and that is the case here (Lemma~\ref{fact:Tminus}.5).

The motivation behind this construction is that all the vertices in each in-tree are able to delegate their emissions to the corresponding sink vertex. For this reason, the sink vertex will be called an {\em emitter} in the rest of the paper.
An important consequence of our construction is that the number of emitters in ${\cal T}^-$ cannot exceed half of the total number of vertices.
\begin{lemma}
\label{lemma:half_n_emitters}
The number of emitters in ${\cal T}^-$ is at most $n/2$.
\end{lemma}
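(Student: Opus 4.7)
The plan is to read the bound off directly from the structural properties of $\mathcal{T}^-$ that have already been compiled in Fact~\ref{fact:Tminus}. By definition, the emitters are exactly the sinks of $\mathcal{T}^-$, and by Fact~\ref{fact:Tminus}.4 every in-tree of $\mathcal{T}^-$ contains exactly one sink, so the number of emitters coincides with the number $k$ of in-trees $\mathcal{T}_1^-, \ldots, \mathcal{T}_k^-$.

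Next I would combine Fact~\ref{fact:Tminus}.2 (every vertex belongs to exactly one in-tree, so the trees partition $V$) with Fact~\ref{fact:Tminus}.3 (every in-tree has at least two vertices). Summing vertex counts across components gives
\[
n \;=\; \sum_{i=1}^{k} |V(\mathcal{T}_i^-)| \;\geq\; 2k,
\]
hence $k \leq n/2$, which is the claimed bound.

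The only point that requires a second look is Fact~\ref{fact:Tminus}.3 itself, since the whole argument collapses if some in-tree could consist of a single isolated vertex. However, this is already guaranteed by the construction of $\mathcal{T}^-$: in $E^-$ every vertex has in-degree at most one and the arc out of any out-degree-$\geq 2$ vertex toward a sink is \emph{flipped} rather than deleted. Consequently no vertex is stripped of all of its incident arcs when passing from $E^-$ to $E^-_T$, so singletons cannot arise. Given that this structural point is handled by the construction, I do not foresee any genuine obstacle in the proof; it is essentially a pigeonhole count on top of the facts already established.
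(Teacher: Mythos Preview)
Your argument is correct and matches the paper's own proof essentially verbatim: it invokes Fact~\ref{fact:Tminus}.4 to identify emitters with in-trees and Fact~\ref{fact:Tminus}.3 (together with the partition from Fact~\ref{fact:Tminus}.2) to bound the number of trees by $n/2$. The extra paragraph revisiting why Fact~\ref{fact:Tminus}.3 holds is not needed here, since that fact is already established prior to the lemma.
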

\begin{proof}
After the transformation from $E^-$ to $E^-_T$, there is only one emitter in each in-tree ${\cal T}_i^- \in {\cal T}^-$ (Lemma~\ref{fact:Tminus}.4), and there are at most $n/2$ trees because each one contains at least $2$ vertices (Lemma~\ref{fact:Tminus}.3).
\end{proof}

We are now ready to define a temporal spanner based on ${\cal T}^-$, which consists of the union of all edges corresponding to arcs in the in-trees and all edges incident to at least one emitter. More formally, let $S_T^-=\{\{u,v\} \in E :(u,v) \in {\cal T}^-\} \cup \{\{u,v\} \in E: u$ is an emitter$\}$.

\begin{theorem}
  \label{th:forward-valid}
  $S_T^-$ is a temporal spanner of $\G$.
\end{theorem}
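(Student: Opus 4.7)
The plan is to show that, for any two vertices $u,v\in V$, there is a journey from $u$ to $v$ using only edges of $S_T^-$. The whole construction is designed around the fact that each in-tree of ${\cal T}^-$ has a unique sink, its \emph{emitter}, and that every non-emitter vertex can funnel toward its emitter via a journey.

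First I would take an arbitrary vertex $u$ and let ${\cal T}_i^-$ be the unique in-tree containing $u$ (Fact~\ref{fact:Tminus}.2), with emitter $e_u$. If $u=e_u$, then $u$ is itself an emitter and the single direct edge $\{u,v\}$ is in $S_T^-$ by definition, giving a trivial one-hop journey. Otherwise, there is a directed path from $u$ to $e_u$ inside ${\cal T}_i^-$, and by Fact~\ref{fact:Tminus}.1 this path corresponds to a journey $J_1$ from $u$ to $e_u$ in $\G$. Every edge of $J_1$ comes from $E^-_T$, hence is in $S_T^-$.

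Next, I would extend $J_1$ to reach $v$. By Fact~\ref{fact:Tminus}.5, the last arc of $J_1$ corresponds to $e^-(e_u)$, so the arrival time of $J_1$ at $e_u$ equals $\lambda(e^-(e_u))$, the smallest label incident to $e_u$. If $v$ is the endpoint of that last hop, we are done. Otherwise, consider the direct edge $\{e_u,v\}$: it belongs to $S_T^-$ because $e_u$ is an emitter and $S_T^-$ contains every edge incident to an emitter. Moreover, since the labeling is simple (adjacent edges carry distinct labels) and $e^-(e_u)$ realizes the minimum at $e_u$, the label of $\{e_u,v\}$ is strictly greater than $\lambda(e^-(e_u))$. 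Concatenating $\{e_u,v\}$ with $J_1$ therefore yields a valid journey from $u$ to $v$ entirely contained in $S_T^-$.

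I do not anticipate a serious obstacle here; the proof is essentially an unrolling of the definitions together with Fact~\ref{fact:Tminus}. The only subtlety is to justify that the hop from the emitter to $v$ can indeed be appended in time, which is where simplicity of the labeling is used. One small care is needed when $u$ or $v$ coincides with $e_u$, but both cases are handled as above in a line. Since $u$ and $v$ were arbitrary, this establishes temporal connectivity of $S_T^-$, proving the theorem.
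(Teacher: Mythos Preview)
Your proof is correct and follows essentially the same approach as the paper: route every vertex to the emitter of its in-tree via Fact~\ref{fact:Tminus}, then use the emitter's full set of incident edges to reach any target, with Fact~\ref{fact:Tminus}.5 guaranteeing that the arrival at the emitter happens at its minimum label so the final hop can always be appended. Your write-up is simply a more detailed unrolling of the paper's two-sentence argument; the only microscopic omission is the case where $v$ already lies on $J_1$ before the last hop, but that is immediately handled by taking the prefix of $J_1$ ending at $v$.
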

\begin{proof}
By Lemma~\ref{fact:Tminus},
every vertex $v$ of $\G$ that is a non-emitter in ${\cal T}^-$ can reach an emitter $s$ through an edge $e^-(s)$. Furthermore, the inclusion of all edges incident to a vertex $s$ that is an emitter in ${\cal T}^-$ ensures that $v$ can still reach all other vertices afterwards and so can $s$.
Therefore, every vertex can reach all other vertices by using only edges from $S_T^-$.
\end{proof}

We call this type of spanner a {\em forward fireworks spanner}. An example is given in Figure~\ref{fig:forward-fireworks-spanner}, the corresponding journeys being depicted on the right side. Note that the orientations of the arcs with labels 0 and 1 in this example can be chosen arbitrarily when constructing $E^-$. If the orientations are chosen as shown in the right part of the figure, then $E^- = E^-_T$.

\begin{figure}[htb]
\centering
\newcommand{\hexafull}{
  \begin{tikzpicture}[scale=1.8]
    \path (0,0) coordinate (m);
    \path (m)+(90:1) node[defnode] (a){};
    \path (m)+(30:1) node[defnode] (b){};
    \path (m)+(330:1) node[defnode] (c){};
    \path (m)+(270:1) node[defnode] (d){};
    \path (m)+(210:1) node[defnode] (e){};
    \path (m)+(150:1) node[defnode] (f){};
    \tikzstyle{every node}=[circle,fill=white,inner sep=.8pt,font=\footnotesize]
    \tikzstyle{every path}=[shorten >= 5pt, shorten <= 5pt]
    \tikzstyle{chain}=[very thick, shorten <= 5pt, shorten >= 5pt]
    \draw (a) edge[chain] node[pos=.42]{7}(c);
    \draw (a) edge[chain] node[pos=.38]{3}(d);
    \draw (a) edge[chain] node[pos=.58]{8}(e);
    \draw (a) edge[chain] node[pos=.5]{6}(f);
    \draw (a) edge[chain] node[pos=.5]{10}(b);
    \draw (b) edge[chain] node[pos=.5]{0}(c);
    \draw (c) edge[chain] node[pos=.5]{2}(d);
    \draw (b) edge[chain] node[pos=.58]{13}(f);
    \draw (c) edge[chain] node[pos=.38]{11}(f);
    \draw (d) edge[chain] node[pos=.42]{14}(f);
    \draw (e) edge[chain] node[pos=.5]{1}(f);
    \draw [dashed, color=gray](c)-- node[pos=.42, text=gray]{4}(e);
    \draw [dashed, color=gray](b)-- node[pos=.38, text=gray]{12}(e);
    \draw [dashed, color=gray](b)-- node[pos=.42, text=gray]{5}(d);
    \draw [dashed, color=gray](d)-- node[pos=.5, text=gray]{9}(e);
  \end{tikzpicture}
  $\Longleftrightarrow$
  \begin{tikzpicture}[scale=1.8]
    \path (0,0) coordinate (m);
    \tikzstyle{every node}=[circle,fill=white,inner sep=1.3pt]
    \path (m)+(90:1) node[emitter] (a){};
    \path (m)+(30:1) node[defnode] (b){};
    \path (m)+(330:1) node[defnode] (c){};
    \path (m)+(270:1) node[defnode] (d){};
    \path (m)+(210:1) node[defnode] (e){};
    \path (m)+(150:1) node[emitter] (f){};
    \tikzstyle{every node}=[circle,fill=white,inner sep=.6pt,font=\footnotesize]
    \tikzstyle{every path}=[shorten >= 5pt, shorten <= 5pt]
    \tikzstyle{chain}=[very thick, shorten <= 5pt, shorten >= 5pt, green]
    \draw (a)-- node[pos=.5]{10}(b);
    \draw (a)-- node[pos=.42]{7}(c);
    \draw (a) edge[chain,<-] node[pos=.38]{3}(d);
    \draw (a)-- node[pos=.58]{8}(e);
    \draw (a)-- node[pos=.5]{6}(f);
    \draw (b) edge[chain,->] node[pos=.5]{0}(c);
    \draw (b)-- node[pos=.42]{5}(d);
    \draw (b)-- node[pos=.38]{12}(e);
    \draw (b)-- node[pos=.58]{13}(f);
    \draw (c) edge[chain,->] node[pos=.5]{2}(d);
    \draw (c)-- node[pos=.42]{4}(e);
    \draw (c)-- node[pos=.38]{11}(f);
    \draw (d)-- node[pos=.5]{9}(e);
    \draw (d)-- node[pos=.42]{14}(f);
    \draw (e) edge[chain,->] node[pos=.5]{1}(f);

    \tikzstyle{every path}=[shorten <= 5pt, very thick, ->, green]
    \path (a)+(-30:.3) coordinate (ab);
    \path (a)+(-60:.3) coordinate (ac);
    \path (a)+(-120:.3) coordinate (ae);
    \path (a)+(-150:.3) coordinate (af);
    \path (f)+(30:.3) coordinate (fa);
    \path (f)+(0:.3) coordinate (fb);
    \path (f)+(-30:.3) coordinate (fc);
    \path (f)+(-60:.3) coordinate (fd);
    \draw (a)--(ab);
    \draw (a)--(ac);
    \draw (a)--(ae);
    \draw (a)--(af);
    \draw (f)--(fa);
    \draw (f)--(fb);
    \draw (f)--(fc);
    \draw (f)--(fd);
  \end{tikzpicture}
}
\readlist\arg{10 7 3 8 6 0 5 12 13 2 4 11 9 14 1}
\hexafull
\caption{\label{fig:forward-fireworks-spanner}Example of a forward fireworks spanner (left) and the journeys from which it is constructed (right).
}
\end{figure}

\begin{theorem}
\label{theorem:3quarter}
Forward fireworks spanners have at most $3\m /4 + O(n)$ edges.
\end{theorem}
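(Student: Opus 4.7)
The plan is to bound $|S_T^-|$ by splitting it into its two defining pieces and then optimizing the bound over the number of emitters. Let $k$ denote the number of emitters in $\mathcal{T}^-$; by Fact~\ref{fact:Tminus}.4 this equals the number of in-trees in $\mathcal{T}^-$, and by Lemma~\ref{lemma:half_n_emitters} we have $k \le n/2$.

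First I would count the edges of $\mathcal{T}^-$: by Facts~\ref{fact:Tminus}.2 and~\ref{fact:Tminus}.3, $\mathcal{T}^-$ is a spanning forest of $k$ trees on $n$ vertices, so it contributes exactly $n-k$ edges. Second, the number of edges of the underlying clique incident to at least one emitter is $\m - \binom{n-k}{2}$, since the complementary edges are precisely those lying between two non-emitters. A union bound then yields
\[
|S_T^-| \;\le\; (n - k) + \m - \binom{n-k}{2}.
\]

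The remaining step is to maximize this quantity over $k \in \{0,1,\dots,\lfloor n/2 \rfloor\}$. Its discrete difference as a function of $k$ is $-1 + (n-k-1) = n - k - 2$, which is non-negative on the feasible range, so the maximum is attained at $k = \lfloor n/2 \rfloor$. Substituting gives $\m - \binom{n/2}{2} + n/2$, which a short calculation shows equals $\tfrac{3n^2}{8} + \Theta(n)$, or equivalently $\tfrac{3}{4}\m + O(n)$, as desired.

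The only delicate point is justifying the edge count in $\mathcal{T}^-$: one must observe that its trees partition $V$ (Fact~\ref{fact:Tminus}.2), each tree on $n_i \ge 2$ vertices contributes $n_i - 1$ edges, and with $\sum n_i = n$ over $k$ summands this yields exactly $n-k$ edges in total. Beyond this, the proof reduces to an elementary monotonicity check and arithmetic, so I do not anticipate any substantive obstacle.
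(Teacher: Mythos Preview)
Your proof is correct and follows essentially the same approach as the paper: bound the tree edges by $O(n)$ and count the emitter-incident edges, arriving at $3n^2/8 + O(n)$. The only cosmetic differences are that you count emitter-incident edges via the complement $\binom{n}{2} - \binom{n-k}{2}$ rather than via stars minus overlaps, and you explicitly optimize over $k$ to justify that $k=\lfloor n/2\rfloor$ is the worst case, whereas the paper simply plugs in $k=n/2$ directly.
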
 

\begin{proof}
Let $S_T^-$ be a forward fireworks spanner based on a set of in-trees ${\cal T}^-$. Each non-emitter in ${\cal T}^-$ has only one out-arc which becomes one edge in $S_T^-$, thus overall ${\cal T}^-$ contributes less than $n$ edges to $S_T^-$. Now, every emitter has an edge to every other vertex that is included in $S_T^-$, and there are at most $n/2$ emitters in ${\cal T}^-$ by Lemma~\ref{lemma:half_n_emitters}. This contributes at most $(n/2)(n-1)$ edges. Note that the edges between emitters are selected twice but should be counted only once. Thus in the end, there are at most $(n/2)(n-1) - {(n/2) \choose 2} = 3n^2/8 -n/4 = 3\m /4 + n/8$ edges, plus the $O(n)$ edges contributed by the in-trees.
\end{proof}

Before moving to Section~\ref{sec:reverse}, we establish a small technical lemma about the structure of the in-trees, which will be used in Section~\ref{sec:iter_deleg}.

\begin{lemma}
\label{lemma:journeythroughminus}
Every non-emitter vertex $u$ in $\mathcal{T}^-$ can reach another vertex $v$ in the same in-tree using a journey of length {\em at most two} that arrives at $v$ through $e^-(v)$.
\end{lemma}

\begin{proof}
For each non-emitter vertex $u$, there exists a $w$ such that $(u,w) \in E_T^-$. Either $(u,w) \in E^-$ and then $\{u,w\}=e^-(w)$, or $(u,w)$ has been obtained by flipping $(w,u) \in E^-$. In the latter case, another vertex $v$ must exist such that $(w,v) \in E^- \cap E_T^-$ and, hence, $\{w,v\}=e^-(v)$.
\end{proof}

\subsection{Backward Fireworks}
\label{sec:reverse}

A symmetrical concept of fireworks can be defined based on the edges $\{u,v\} = e^+(u)$ of a temporal clique $\G=(G,\lambda)$. All arguments developed in the context of forward fireworks can be adapted in a symmetrical way, so we will omit most of the details.
First, we build a directed graph ${G^+=(V,E^+)}$ such that $(u,v)\in E^+$ iff $\{u,v\} = e^+(u)$, except that, if $e^+(u) = e^+(v)$ for some $u$ and $v$, only one of the arcs is included (chosen arbitrarily). $E^+$ induces a disjoint set of {\em in-trees}. 

We transform $G^+$ into a disjoint set ${\cal T}^+= (V,E^+_T)$ of {\em out-trees} 
as follows. Let $E^+_T$ be initialized as a copy of $E^+$. 
For every $v$ with indegree at least $2$ in $G^+$, let $(u_1,v), ..., (u_{\ell},v)$ be its in-arcs with $(u_{\ell},v)$ being the one with the {\em smallest} label.
For every $i < \ell$, if $u_i$ is a source vertex, then flip the direction of $(u_i,v)$ in $E^+_T$ (\ie replace $(u_i,v)$ by $(v,u_i)$ in $E^+_T$); otherwise remove $(u_i,v)$ from $E^+_T$. Let ${\cal T}^+= (V,E^+_T)$ be the resulting set of out-trees ${\cal T}_1^+, ..., {\cal T}_k^+$. 

Each out-tree in ${\cal T}^+$ contains only one source which we call a {\em collector}.
The collector $s$ of an out-tree can reach all of the other vertices in this tree by journeys that leave $s$ through its edge $e^+(s)$, thereby guaranteeing that every other vertex that reaches $s$ can subsequently reach all other vertices in the tree. Finally, we can build a temporal spanner $S_T^+=\{\{u,v\} \in E:(u,v) \in {\cal T}^+\} \cup \{\{u,v\} \in E: u$ is a collector$\}$ which we call a {\em backward fireworks spanner}. An example of a backward fireworks spanner is given in Figure~\ref{fig:backward-fireworks-spanner}. 
The proofs of the following four lemmas and theorems are symmetrical to the arguments in Section~\ref{subsec:forward_fw}.

\begin{lemma}
\label{lemma:half_n_collectors}
The number of collectors in ${\cal T}^+$ is at most $n/2$
\end{lemma}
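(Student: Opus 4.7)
The plan is to mirror the proof of Lemma~\ref{lemma:half_n_emitters} using the symmetric construction. First I would state, without redoing the argument, the analogue of Fact~\ref{fact:Tminus} adapted to $\mathcal{T}^+$: (i) every vertex belongs to exactly one out-tree of $\mathcal{T}^+$, (ii) every out-tree contains at least two vertices, (iii) every out-tree has a unique source, which is the collector, and (iv) the unique arc incident to a collector $s$ corresponds to $e^+(s)$. These properties follow immediately from the symmetric construction described just before the lemma: $E^+$ is built from the edges $e^+(v)$, then flipped/pruned exactly as in the forward case, but with the roles of ``minimum'' and ``maximum,'' and ``in-tree'' and ``out-tree,'' interchanged.

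Once these facts are in place, the bound is immediate. Because the out-trees of $\mathcal{T}^+$ partition $V$ (property (i)), each contains at least two vertices (property (ii)), there can be at most $\lfloor n/2 \rfloor$ such out-trees. Since each out-tree contains exactly one collector (property (iii)), the total number of collectors is at most $n/2$.

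The only non-automatic point is verifying that the symmetric analogue of Fact~\ref{fact:Tminus} does hold, and in particular that every component of the underlying graph of $E^+$ contains at least two vertices (so that no out-tree of $\mathcal{T}^+$ is reduced to an isolated collector). This is where I would expect the bulk of the (very short) write-up to go. The argument is: every vertex $v$ has a well-defined edge $e^+(v)$, which contributes an arc of $E^+$ incident to $v$ (either into $v$ if $\{u,v\}=e^+(v)$ with $u$ the partner, or out of $v$ in the tie-breaking case when $e^+(u)=e^+(v)$ and the opposite direction was chosen), so no vertex is isolated. Moreover, $E^+$ is acyclic by the same label-monotonicity argument as in the forward case: following arcs $(u_i,u_{i+1})$ with $\{u_i,u_{i+1}\}=e^+(u_{i+1})$ forces the labels to strictly decrease along the cycle, a contradiction. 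These two observations together guarantee that each component of $E^+$, and therefore each out-tree of $\mathcal{T}^+$ obtained after the transformation, has at least two vertices, completing the proof.
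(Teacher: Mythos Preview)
Your proposal is correct and follows essentially the same approach as the paper, which simply states that the proof is symmetric to that of Lemma~\ref{lemma:half_n_emitters} (via the analogue of Fact~\ref{fact:Tminus}). Your write-up is actually more detailed than the paper's, since the paper does not spell out the verification of the symmetric Fact; one small slip is that in the symmetric construction the arc corresponding to $e^+(v)$ goes \emph{out} of $v$ (so that $G^+$ is a union of in-trees before the transformation), not into $v$ as you wrote, but this does not affect your argument since either orientation makes $v$ non-isolated and the monotonicity/acyclicity reasoning goes through.
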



\begin{theorem}
\label{th:backward-valid}
$S_T^+$ is a temporal spanner of the temporal clique $\G$.
\end{theorem}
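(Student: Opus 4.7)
The plan is to mirror Theorem~\ref{th:forward-valid} with time reversed, so that the role played by emitters and their outgoing guarantees is taken up by collectors and incoming guarantees. First I would invoke the symmetric analogue of Fact~\ref{fact:Tminus} applied to $\mathcal{T}^+$: directed paths in $\mathcal{T}^+$ correspond to journeys in $\G$; every vertex sits in exactly one out-tree; each out-tree has at least two vertices and a unique source, its collector; and the unique arc incident to a collector $s$ corresponds to $e^+(s)$. The key consequence, to be used repeatedly, is that for every out-tree $\mathcal{T}_i^+$ with collector $s$ the tree's arcs form a journey, using only edges of $S_T^+$, that starts with $e^+(s)$ and reaches every other vertex of the tree.

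With this in hand, reachability is immediate. Given a source $v$ and a target $u \ne v$, let $s_u$ be the collector of the out-tree containing $u$. If $v = s_u$, the out-tree journey from $v$ to $u$ already lies in $S_T^+$. Otherwise the edge $\{v, s_u\}$ belongs to $S_T^+$ because every edge incident to a collector is included by construction. Its label is at most that of $e^+(s_u)$ since $e^+(s_u)$ is the uniquely largest-labeled edge at $s_u$, and simple labeling makes this inequality strict except when $\{v, s_u\} = e^+(s_u)$ itself. When strict, $v$ traverses $\{v, s_u\}$ to reach $s_u$ and then follows the out-tree journey from $s_u$ to $u$, which begins with $e^+(s_u)$; when equal, $v$ is the unique direct child of $s_u$ in $\mathcal{T}_i^+$ (since $e^+(s_u)$ is the unique arc leaving the collector), hence every vertex of the tree other than $s_u$, in particular $u$, is reached by continuing the out-tree journey from $v$ onward.

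The only subtle point, exactly as in the forward case, is the splicing of the incoming edge $\{v, s_u\}$ with the outgoing out-tree journey at $s_u$: the former must have a strictly smaller label than $e^+(s_u)$, which is precisely what simple labeling guarantees at $s_u$, and the boundary coincidence $\{v, s_u\} = e^+(s_u)$ is absorbed by riding the tree directly from $v$. Beyond this the proof is a direct time-reversal of Theorem~\ref{th:forward-valid}, so I expect no essential difficulty.
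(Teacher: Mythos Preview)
Your proposal is correct and follows essentially the same approach as the paper, which simply states that the proof is symmetric to Theorem~\ref{th:forward-valid}. You have spelled out that symmetry explicitly---routing each pair $(v,u)$ through the collector $s_u$ of $u$'s out-tree via the included edge $\{v,s_u\}$ and then along the tree---and your handling of the boundary case $\{v,s_u\}=e^+(s_u)$ is a nice extra bit of care (though when $u=s_u$ the single edge already suffices, which you could mention for completeness).
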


\begin{theorem}
\label{theorem:3quarter-back}
Backward fireworks spanners have at most $3\m/4 + O(n)$ edges.
\end{theorem}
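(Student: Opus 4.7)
The plan is to mirror, almost verbatim, the counting argument of Theorem~\ref{theorem:3quarter}, substituting the symmetric objects defined in this subsection. First I would partition the edges of $S_T^+$ into two classes: those contributed by the out-trees of $\mathcal{T}^+$, and those added because one endpoint is a collector. For the first class, I would invoke the out-tree analogue of Fact~\ref{fact:Tminus}: every non-collector vertex has exactly one incoming arc in $E_T^+$, so these arcs contribute at most $n-1$ edges to $S_T^+$, i.e.\ only $O(n)$.

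Next I would bound the second class using Lemma~\ref{lemma:half_n_collectors}. Let $k$ denote the number of collectors; then $k \le n/2$. Since every collector $s$ contributes all $n-1$ edges incident to $s$, a naive sum gives $k(n-1)$ edges, but each edge joining two collectors is counted twice and must be subtracted once. The true contribution is therefore at most $k(n-1) - \binom{k}{2} \le (n/2)(n-1) - \binom{n/2}{2}$.

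Combining both contributions and simplifying gives $|S_T^+| \le (n/2)(n-1) - \binom{n/2}{2} + O(n) = \tfrac{n^2}{2} - \tfrac{n^2}{8} + O(n) = \tfrac{3n^2}{8} + O(n) = \tfrac{3}{4}\binom{n}{2} + O(n)$, which is the claimed bound.

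No step is genuinely delicate here: the main content of the proof is already carried by the symmetric construction in Section~\ref{sec:reverse} and by Lemma~\ref{lemma:half_n_collectors}. If anything qualifies as the ``hard'' part, it is simply the bookkeeping to avoid double-counting the edges between two collectors, and this is handled by the $\binom{k}{2}$ correction exactly as in Theorem~\ref{theorem:3quarter}.
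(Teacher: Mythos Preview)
Your proposal is correct and mirrors exactly the argument the paper gives for Theorem~\ref{theorem:3quarter}; indeed, the paper itself does not write out a separate proof for Theorem~\ref{theorem:3quarter-back} but simply declares it symmetrical to the forward case. Your bookkeeping is even slightly more careful than the paper's (using $k(n-1)-\binom{k}{2}$ rather than the cruder $n(n/2)-n^2/8$), but the structure---tree arcs give $O(n)$ edges, at most $n/2$ collectors by Lemma~\ref{lemma:half_n_collectors}, subtract the double-counted collector--collector edges---is identical.
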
 

\begin{figure}[htb]
\centering
\newcommand{\hexafull}{
  \begin{tikzpicture}[scale=1.9]
    \path (0,0) coordinate (m);
    \path (m)+(90:1) node[defnode] (a){};
    \path (m)+(30:1) node[defnode] (b){};
    \path (m)+(330:1) node[defnode] (c){};
    \path (m)+(270:1) node[defnode] (d){};
    \path (m)+(210:1) node[defnode] (e){};
    \path (m)+(150:1) node[defnode] (f){};
    \tikzstyle{every node}=[circle,fill=white,inner sep=1.3pt]
    \tikzstyle{every path}=[shorten >= 5pt, shorten <= 5pt]
    \tikzstyle{chain}=[very thick, shorten <= 5pt, shorten >= 5pt]
    \draw (a) edge[chain] node[pos=.42]{7}(c);
    \draw (a) edge[chain] node[pos=.38]{3}(d);
    \draw (a) edge[chain] node[pos=.58]{8}(e);
    \draw (a) edge[chain] node[pos=.5]{6}(f);
    \draw (a) edge[chain] node[pos=.5]{10}(b);
    \draw (b) edge[chain] node[pos=.5]{0}(c);
    \draw (c) edge[chain] node[pos=.5]{2}(d);
    \draw (c) edge[chain] node[pos=.42]{4}(e);
    \draw (c) edge[chain] node[pos=.38]{11}(f);
    \draw (b) edge[chain] node[pos=.38]{12}(e);
    \draw (c) edge[chain] node[pos=.38]{11}(f);
    \draw (d) edge[chain] node[pos=.42]{14}(f);
    \draw [dashed, color=gray](b)-- node[pos=.58, text=gray]{13}(f);
    \draw [dashed, color=gray](b)-- node[pos=.42, text=gray]{5}(d);
    \draw [dashed, color=gray](d)-- node[pos=.5, text=gray]{9}(e);
    \draw [dashed, color=gray](e)-- node[pos=.5, text=gray]{1}(f);
  \end{tikzpicture}
  $\Longleftrightarrow$
  \begin{tikzpicture}[scale=1.9]
    \path (0,0) coordinate (m);
    \path (m)+(90:1) node[receiver] (a){};
    \path (m)+(30:1) node[defnode] (b){};
    \path (m)+(330:1) node[receiver] (c){};
    \path (m)+(270:1) node[defnode] (d){};
    \path (m)+(210:1) node[defnode] (e){};
    \path (m)+(150:1) node[defnode] (f){};
    \tikzstyle{every node}=[circle,fill=white,inner sep=1.3pt]
    \tikzstyle{every path}=[shorten >= 5pt, shorten <= 5pt]
    \draw (a)-- node[pos=.5]{10}(b);
    \draw (a)-- node[pos=.42]{7}(c);
    \draw (a)-- node[pos=.38]{3}(d);
    \draw (a)-- node[pos=.58]{8}(e);
    \draw (a)-- node[pos=.5]{6}(f);
    \draw (b)-- node[pos=.5]{0}(c);
    \draw (b)-- node[pos=.42]{5}(d);
    \draw (b)-- node[pos=.38]{12}(e);
    \draw (b)-- node[pos=.58]{13}(f);
    \draw (c)-- node[pos=.5]{2}(d);
    \draw (c)-- node[pos=.42]{4}(e);
    \draw (c)-- node[pos=.38]{11}(f);
    \draw (d)-- node[pos=.5]{9}(e);
    \draw (d)-- node[pos=.42]{14}(f);
    \draw (e)-- node[pos=.5]{1}(f);
    
    \tikzstyle{chain}=[ultra thick, shorten <= 5pt, shorten >= 5pt, red]
    \draw (a) edge[chain,->] node[pos=.5]{10}(b);
    \draw (b) edge[chain,->] node[pos=.38]{12}(e);
    \draw (c) edge[chain,->] node[pos=.38]{11}(f);
    \draw (d) edge[chain,<-] node[pos=.42]{14}(f);

    \tikzstyle{every path}=[shorten <= 5pt, ultra thick, <-, red]
    \path (a)+(-150:.3) coordinate (af);
    \path (a)+(-120:.3) coordinate (ae);
    \path (a)+(-90:.3) coordinate (ad);
    \path (a)+(-60:.3) coordinate (ac);
    \path (c)+(120:.3) coordinate (ca);
    \path (c)+(90:.3) coordinate (cb);
    \path (c)+(210:.3) coordinate (cd);
    \path (c)+(180:.3) coordinate (ce);
    \draw (a)--(af);
    \draw (a)--(ae);
    \draw (a)--(ad);
    \draw (a)--(ac);
    \draw (c)--(ca);
    \draw (c)--(cb);
    \draw (c)--(cd);
    \draw (c)--(ce);
  \end{tikzpicture}
}
\readlist\arg{10 7 3 8 6 0 5 12 13 2 4 11 9 14 1}
\hexafull
\caption{\label{fig:backward-fireworks-spanner}Example of a backward fireworks spanner (left) and the journeys from which it is constructed (right).}
\end{figure}

\begin{lemma}
\label{lemma:journeythroughmax}
Every non-collector vertex $v$ in ${\cal T}^+$ can reach another vertex $v'$ in the same out-tree using a journey of length {\em at most two} that leaves $v'$ through $e^+(v')$.
\end{lemma}

\subsection{Bidirectional Fireworks}
\label{sec:bidirectional}

A forward fireworks spanner makes it possible to identify a subset of vertices, the {\em emitters}, such that every vertex can reach at least one emitter $u$ through $e^-(u)$ and $u$ can reach every other vertex afterwards {\em through a single edge}. Similarly, a backward fireworks spanner makes it possible to identify a subset of vertices, the {\em collectors}, such that every vertex can be reached by at least one collector $v$ through $e^+(v)$ and $v$ can be reached by every other vertex before this {\em through a single edge}. Combining both ideas, we can define a sparser spanner in which we do not need to include {\em all} of the edges incident to emitters and collectors, but only the edges {\em between} emitters and collectors (plus, of course, the edges used for reaching an emitter and for being reached by a collector).

Precisely, let ${\cal T}^-$ be the disjoint set of in-trees obtained during the construction of a forward fireworks spanner (see Figure~\ref{fig:minchain}), 
and let ${\cal T}^+$ be the disjoint set of out-trees obtained during the construction of a backward fireworks spanner. 
Let $X^-$ be the set of emitters (one per in-tree in ${\cal T}^-$) and let $X^+$ be the set of collectors (one per out-tree in ${\cal T}^+$). The two sets can overlap, as a vertex may happen to be both an emitter in some tree in ${\cal T}^-$ and a collector in some tree in ${\cal T}^+$. 
Let $H=(X^- \cup X^+, E_H)$ be the graph such that $E_H=\{\{u,v\} \in E : u \in X^-, v \in X^+\}$; in other words, $H$ is the subgraph of $G$ that connects all emitters with all collectors. Finally, let $S=\{\{u,v\}:(u,v) \in {\cal T}^- \cup {\cal T}^+\} \cup E_H$. We call $S$ a bidirectional fireworks spanner (or simply a {\em fireworks spanner}). An illustration is given in Figure~\ref{fig:connected-fireworks-spanner}.

\begin{figure}[htb]
\centering
\newcommand{\hexafull}{
  \begin{tikzpicture}[scale=1.9]
    \path (0,0) coordinate (m);
    \path (m)+(90:1) node[defnode] (a){};
    \path (m)+(30:1) node[defnode] (b){};
    \path (m)+(330:1) node[defnode] (c){};
    \path (m)+(270:1) node[defnode] (d){};
    \path (m)+(210:1) node[defnode] (e){};
    \path (m)+(150:1) node[defnode] (f){};
    \tikzstyle{every node}=[circle,fill=white,inner sep=.4pt,font=\footnotesize]
    \tikzstyle{every path}=[shorten >= 5pt, shorten <= 5pt]
    \tikzstyle{chain}=[very thick, shorten <= 5pt, shorten >= 5pt]
    \draw (a) edge[chain] node[pos=.42]{7}(c);
    \draw (a) edge[chain] node[pos=.38]{3}(d);
    \draw (a) edge[chain] node[pos=.5]{6}(f);
    \draw (b) edge[chain] node[pos=.5]{0}(c);
    \draw (c) edge[chain] node[pos=.5]{2}(d);
    \draw (e) edge[chain] node[pos=.5]{1}(f);
    \draw (a) edge[chain] node[pos=.5]{10}(b);
    \draw (b) edge[chain] node[pos=.38]{12}(e);
    \draw (c) edge[chain] node[pos=.38]{11}(f);
    \draw (d) edge[chain] node[pos=.42]{14}(f);
    \draw [dashed, color=gray](a)-- node[pos=.58, text=gray]{8}(e);
    \draw [dashed, color=gray](b)-- node[pos=.58, text=gray]{13}(f);
    \draw [dashed, color=gray](b)-- node[pos=.42, text=gray]{5}(d);
    \draw [dashed, color=gray](c)-- node[pos=.42, text=gray]{4}(e);
    \draw [dashed, color=gray](d)-- node[pos=.5, text=gray]{9}(e);
  \end{tikzpicture}
  $\Longleftrightarrow$
  \begin{tikzpicture}[scale=1.9]
    \path (0,0) coordinate (m);
    \path (m)+(90:1) node[emitter_receiver] (a){};
    \begin{scope}[on background layer, rotate=30]
    \fill[green] (a.base) ([xshift = 0mm]a.east) arc (0:180:0.7mm)
      -- cycle;
    \fill[red] (a.base) ([xshift = 0pt]a.west) arc (180:360:0.72mm)
      -- cycle;
  \end{scope}
    \path (m)+(30:1) node[defnode] (b){};
    \path (m)+(330:1) node[receiver] (c){};
    \path (m)+(270:1) node[defnode] (d){};
    \path (m)+(210:1) node[defnode] (e){};
    \path (m)+(150:1) node[emitter] (f){};
    \tikzstyle{every node}=[circle,fill=white,inner sep=.4pt,font=\footnotesize]
    \tikzstyle{every path}=[shorten >= 5pt, shorten <= 5pt]
    \tikzstyle{chain}=[ultra thick, shorten <= 5pt, shorten >= 5pt, green]
    \draw (a)-- node[pos=.5]{10}(b);
    \draw (a)-- node[pos=.42]{7}(c);
    \draw (a) edge[chain,<-] node[pos=.38]{3}(d);
    \draw (a)-- node[pos=.58]{8}(e);
    \draw (a)-- node[pos=.5]{6}(f);
    \draw (b) edge[chain,->] node[pos=.5]{0}(c);
    \draw (b)-- node[pos=.42]{5}(d);
    \draw (b)-- node[pos=.38]{12}(e);
    \draw (b)-- node[pos=.58]{13}(f);
    \draw (c) edge[chain,->] node[pos=.5]{2}(d);
    \draw (c)-- node[pos=.42]{4}(e);
    \draw (c)-- node[pos=.38]{11}(f);
    \draw (d)-- node[pos=.5]{9}(e);
    \draw (d)-- node[pos=.42]{14}(f);
    \draw (e) edge[chain,->] node[pos=.5]{1}(f);

    \tikzstyle{every path}=[shorten <= 5pt, ultra thick, ->, green]
    \path (a)+(-60:.3) coordinate (ac);
    \path (f)+(30:.3) coordinate (fa);
    \draw (a)--(ac);
    \draw (f)--(fa);
    
    \tikzstyle{chain}=[ultra thick, shorten <= 5pt, shorten >= 5pt, red]
    \draw (a) edge[chain,->] node[pos=.5]{10}(b);
    \draw (b) edge[chain,->] node[pos=.38]{12}(e);
    \draw (c) edge[chain,->] node[pos=.38]{11}(f);
    \draw (d) edge[chain,<-] node[pos=.42]{14}(f);

    \tikzstyle{every path}=[shorten <= 5pt, ultra thick, <-, red]
    \path (a)+(-150:.3) coordinate (af);
    \path (c)+(120:.3) coordinate (fd);
    \draw (a)--(af);
    \draw (c)--(fd);
  \end{tikzpicture}
}
\readlist\arg{10 7 3 8 6 0 5 12 13 2 4 11 9 14 1}
\hexafull
\caption{\label{fig:connected-fireworks-spanner}A bidirectional fireworks spanner (left) and the journeys from which it is constructed (right). The forward component and the emitters are depicted in (light) green; the backward component and the collectors are depicted in (darker) red. The top vertex is both an emitter and a collector.
}
\end{figure}

\begin{theorem}
  \label{th:bidirectional}
  $S$ is a temporal spanner of $\G$.
\end{theorem}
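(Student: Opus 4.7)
The plan is to show that for arbitrary distinct vertices $a$ and $b$, a journey in $S$ from $a$ to $b$ exists. The natural construction is a three-segment journey: first travel from $a$ through the in-tree of $\mathcal{T}^-$ containing $a$ to its emitter $u$; then cross from $u$ to the collector $w$ of $b$'s out-tree in $\mathcal{T}^+$ via the single edge $\{u,w\} \in E_H$; finally travel from $w$ through the out-tree to $b$. All three segments consist of edges in $S$ by construction, and the first and third are already valid journeys in $\G$ by Fact~\ref{fact:Tminus}.1 and its out-tree analogue.

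What makes the concatenation work is the extremal nature of the boundary labels. By Fact~\ref{fact:Tminus}.5 (and symmetry), the first segment arrives at $u$ along $e^-(u)$ and the third segment leaves $w$ along $e^+(w)$. Since $e^-(u)$ carries the smallest label at $u$ and $e^+(w)$ the largest at $w$, the label of the crossing edge $\{u,w\}$ automatically lies between these two extremes; by simplicity of the labeling each inequality is strict unless $\{u,w\}$ literally equals $e^-(u)$ or $e^+(w)$. When $a = u$ or $b = w$ the corresponding outer segment is empty, and when $u = w$ I just concatenate the two non-empty segments directly, using that $e^-(u)$ has a smaller label than $e^+(u)$ whenever $u$ has at least two incident edges (the case $n \le 2$ is trivial).

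The main obstacle, and the reason the naive concatenation can fail, is the degenerate coincidence case, where otherwise the journey would try to use the same edge twice. If $\{u,w\} = e^-(u)$, then structurally $w$ is the predecessor of $u$ on the in-tree path from $a$; I would therefore truncate the first segment at $w$, drop the crossing edge, and glue directly to the third segment. The label of the truncated segment's last edge is strictly smaller than that of $\{u,w\}$, which is at most the label of $e^+(w)$, so the resulting sequence remains a journey. The symmetric fix handles $\{u,w\} = e^+(w)$: here $u$ is the first hop of the third segment out of $w$, so I drop the crossing edge and begin the third segment one edge later, at $u$. When both coincidences occur simultaneously either fix works. The key structural observation enabling these fixes is that $\{u,w\} = e^-(u)$ forces $w$ to sit inside $a$'s in-tree (as the parent of $u$), and symmetrically $\{u,w\} = e^+(w)$ forces $u$ to sit inside $b$'s out-tree.
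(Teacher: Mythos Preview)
Your three-segment construction is exactly the paper's approach, which the paper states in three lines: every vertex reaches an emitter $u$ through $e^-(u)$; every emitter then reaches every collector via the direct edge in $E_H$; every collector $w$ reaches every vertex through $e^+(w)$. The extremal-label argument you give for why the pieces concatenate with non-decreasing times is the intended one.

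Your handling of the ``degenerate coincidence'' cases, however, is both unnecessary and incomplete. It is unnecessary because the concatenation of the three segments is in any event a temporal \emph{walk} with non-decreasing labels (using an edge twice at the same time is harmless here), and any such walk can be shortened to a genuine journey by excising loops: whenever $u_i=u_j$ with $i<j$, delete the portion in between; the remaining times stay non-decreasing. This single observation simultaneously dispatches your coincidence cases, the $u=w$ case, and everything else. It is incomplete because even after your truncation fixes, the first and third segments may still share \emph{internal} vertices --- a vertex on $a$'s in-tree path can perfectly well also lie on $b$'s out-tree path, since $\mathcal{T}^-$ and $\mathcal{T}^+$ are built independently on the same vertex set --- so the ``resulting sequence'' you produce is still only a walk, not a path. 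The paper's proof is terse enough to leave this implicit; the clean way to close the argument is the walk-to-journey shortening, not a case analysis.
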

\begin{proof}
  Every non-emitter vertex can reach at least one emitter $u$ through $e^-(u)$. Every emitter can reach {\em all} collectors afterwards. Every vertex can be reached by a collector $v$ through $e^+(v)$.
\end{proof}

\begin{theorem}
Bidirectional fireworks spanners have at most $\m/2 + O(n)$ edges.
\end{theorem}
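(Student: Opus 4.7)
The plan is to split the edges of $S$ into two disjoint contributions---the arcs of $\mathcal{T}^-\cup\mathcal{T}^+$ and the bipartite-like set $E_H$---and bound each independently. The tree contribution will vanish into the $O(n)$ error term; the bulk of the argument will be to show that $|E_H|\le n^2/4$, which combined with the elementary identity $n^2/4 = \binom{n}{2}/2 + n/4$ gives precisely the bound $\m/2 + O(n)$.

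For the tree part, I would simply observe that $\mathcal{T}^-$ partitions $V$ into vertex-disjoint in-trees (Fact~\ref{fact:Tminus}.2) and that $\mathcal{T}^+$ does the same with out-trees, so each of the two sets consists of at most $n-1$ arcs. Their combined contribution to $|S|$ is therefore at most $2(n-1) = O(n)$, which is absorbed into the error term regardless of any overlap between the two tree sets or between them and $E_H$.

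The main step is to bound $|E_H|$. Here I would invoke Lemmas~\ref{lemma:half_n_emitters} and~\ref{lemma:half_n_collectors} to obtain $|X^-|\le n/2$ and $|X^+|\le n/2$, and then note that every edge $\{u,v\}\in E_H$ admits at least one orientation $(u,v)$ with $u\in X^-$ and $v\in X^+$. Hence $|E_H|$ is dominated by the number of such ordered pairs, namely $|X^-|\cdot|X^+|\le (n/2)(n/2) = n^2/4$. The one subtlety worth mentioning---and really the only place a careful reader might pause---is that $X^-$ and $X^+$ may overlap, since a vertex can be simultaneously an emitter of some tree in $\mathcal{T}^-$ and a collector of some tree in $\mathcal{T}^+$; thus $E_H$ is not literally the edge set of a complete bipartite graph $K_{|X^-|,|X^+|}$. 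The ordered-pair argument side-steps this cleanly, however, because any edge in $E_H$ gives rise to at least one legal ordered pair in $X^-\times X^+$ irrespective of the overlap. Adding the tree contribution then yields $|S|\le n^2/4 + O(n) = \m/2 + O(n)$, as claimed.
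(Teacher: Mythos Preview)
Your proof is correct and follows essentially the same approach as the paper: bound the tree contribution by $O(n)$, invoke Lemmas~\ref{lemma:half_n_emitters} and~\ref{lemma:half_n_collectors} to cap $|X^-|$ and $|X^+|$ at $n/2$, and conclude $|E_H|\le n^2/4$. Your ordered-pair argument for handling the possible overlap of $X^-$ and $X^+$ is just a slightly more formal phrasing of the paper's remark that the edge count is maximized when the two sets are disjoint.
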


\begin{proof}
The number of edges in ${\cal T}^-$ and ${\cal T}^+$ is linear in $n$.
By Lemma~\ref{lemma:half_n_emitters}, the number of emitters is at most $n/2$, and so is the number of collectors by Lemma~\ref{lemma:half_n_collectors}. Some vertices may be both emitter and collector; however, the number of edges is maximized when $X^-$ and $X^+$ are disjoint, \ie $H$ is a complete bipartite graph with $n/2$ vertices in each part. Thus, the spanner contains at most $n^2/4 = \m/2 +n/4$ edges plus the $O(n)$ edges of ${\cal T}^-$ and ${\cal T}^+$.
\end{proof}

\section{Recursing or Sparsifying}
\label{sec:iter_deleg}

After applying the fireworks technique, one is left with a residual instance (a spanner of the input clique $\G$ on vertices $V$) made of all the edges between emitters $X^-$ and collectors $X^+$, together with all the edges corresponding to the arcs of ${\cal T}^-$ and ${\cal T}^+$, these edges being denoted $S^-$ and $S^+$. 
Depending on the properties of this residual instance, the algorithm may dismount it (and recurse on a subset of vertices), or it may continue to sparsify it using different techniques. The global algorithm is summarized in \cref{fig:global_algorithm}.
To simplify the notation in the following, we use variables $\G$ and $V$ to refer to the residual instance in the current step of the recursion.

The following two cases are considered: Either $X^- \cup X^+ \ne V$ (Case~1) or $X^- \cup X^+ = V$ (Case~2).

\setcounter{case}{0}
\begin{case}[$X^- \cup X^+ \ne V$]
\label{case:em_rec_neq_V}
In this configuration, at least one vertex $v$ is neither emitter nor collector.
By Lemma~\ref{lemma:journeythroughminus}, there exists a journey of length at most two from $v$ that arrives at some vertex $u \ne v$ through $e^-(u)$. Similarly, by Lemma~\ref{lemma:journeythroughmax}, there is a journey of length at most two from some vertex $w \ne v$ to $v$, leaving $w$ through $e^+(w)$. As a result, $v$ is $2$-hop dismountable (see Section~\ref{sec:delegation}). One can thus select the corresponding edges (at most four) for future inclusion in the spanner and recurse on $\G[V\setminus v]$; that is, re-apply the fireworks technique from scratch to this smaller graph. Repeating this strategy, either the recursion keeps entering Case~1 and dismounts the graph entirely, or it eventually enters Case~2.
\end{case}

\begin{case}[$X^- \cup X^+ = V$]
\label{case:em_rec_eq_V}
Both $X^-$ and $X^+$ have size at most $n/2$ (Lemma~\ref{lemma:half_n_emitters} and~\ref{lemma:half_n_collectors}), thus if their union is $V$, then both sets must be disjoint and of size exactly $n/2$.
As a result, the graph which connects vertices of $X^-$ to vertices of $X^+$ (called $H$ in Section~\ref{sec:fireworks}) is a complete bipartite graph. In fact, $H$ possesses even more structure; in particular, both $S^-$ and $S^+$ are perfect matchings -- by contradiction, if this is not the case, then at least one of the in-trees (out-trees) contains more than one edge, resulting in strictly less than $n/2$ emitters (collectors). 
Furthermore, every vertex is either an emitter or a collector, thus each of these edges connects an emitter with a collector, implying that the current residual instance $\G$ is $H$ itself. Now, recall that every edge in $S^-$ is locally minimum for the corresponding emitter (Lemma~\ref{fact:Tminus}.5), and every edge in $S^+$ is locally maximum for the corresponding collector. We thus have the following stronger property.

\begin{lemma}
  \label{lem:even-more}
If the minimum edge of an emitter is not also the minimum edge of the corresponding collector in $H$, then the residual instance is $2$-hop dismountable. The same holds if the maximum edge of a collector is not also the maximum edge of the corresponding emitter in $H$.
\end{lemma}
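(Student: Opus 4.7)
The plan is to exhibit an explicit $2$-hop dismountable vertex, using the rigid matching structure that Case~2 forces on $S^-$ and $S^+$: the two short journeys required by the definition of $2$-hop dismountability will come one from each matching.

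First I would set up notation. In Case~2 each in-tree of $\mathcal{T}^-$ has exactly two vertices (one emitter sink and one collector), so $S^-$ is a perfect matching; write $c(s) \in X^+$ for the $S^-$-partner of an emitter $s \in X^-$. Then $\{s,c(s)\} = e^-(s)$ in $\G$, and since every neighbor of $s$ in $\G$ lies in $X^+$ (because $X^- \cup X^+ = V$ and $X^- \cap X^+ = \emptyset$), this is also the minimum edge at $s$ inside $H$. Symmetrically, $S^+$ is a perfect matching and pairs every collector with an emitter along an edge realizing $e^+$ of the collector in both $\G$ and $H$.

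Next, assuming the first hypothesis, I would fix an emitter $s$ whose partner $c(s)$ has its $H$-minimum on some edge $\{c(s),s'\}$ with $s' \ne s$; by locally injective labels the label of $\{c(s),s'\}$ is then strictly smaller than that of $\{s,c(s)\}$. My candidate dismountable vertex is $s'$. For the outgoing (emission) condition, concatenate $\{s',c(s)\}$ and $\{c(s),s\}$: the labels are strictly increasing by the choice of $s'$, so this is a $2$-hop journey in $H$ from $s'$ to $s$ whose final edge is $e^-(s)$. For the incoming (reception) condition, let $c'$ be the $S^+$-partner of $s'$; the single edge $\{c',s'\}$ is already $e^+(c')$, giving a length-$1$ journey from $c'$ that leaves $c'$ through $e^+(c')$. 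The two witnesses $s$ and $c'$ are distinct from $s'$ (the former by hypothesis, the latter because $X^- \cap X^+ = \emptyset$), so the definition of $2$-hop dismountability is met.

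The second statement I would handle by the dual argument, swapping the roles of $e^-$ and $e^+$, of $S^-$ and $S^+$, and of emitters and collectors: the dismountable vertex becomes the collector $c'$ witnessing that the $S^+$-matched maximum of some emitter is beaten inside $H$, with the incoming $2$-hop journey coming from the $S^+$-matching and the outgoing length-$1$ journey coming from the $S^-$-matching. The only delicate thing to check in both halves is that the candidate and the two dismountability witnesses $v,w$ are pairwise distinct, and this follows at once from $X^- \cap X^+ = \emptyset$ together with the strict label inequality supplied by the hypothesis; I do not anticipate any harder obstacle, since the matching structure of Case~2 already pins down every relevant endpoint.
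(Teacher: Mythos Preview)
Your argument is essentially the paper's: take the emitter $s'$ that reaches $c(s)$ before $s$ does, use the two-hop journey $s'\to c(s)\to s$ (arriving through $e^-(s)$) for the emission side, and the $S^+$-matching edge at $s'$ for the reception side; the second half is handled symmetrically. One small slip worth fixing: your justification that $\{s,c(s)\}$ is also the minimum at $s$ inside $H$ asserts that ``every neighbor of $s$ in $\G$ lies in $X^+$,'' which is false in a clique (the other emitters are neighbors too); the correct reason is simply that the $H$-edges at $s$ are a subset of the $\G$-edges at $s$ and $\{s,c(s)\}$ already realizes the global minimum---but this point is not actually needed for the dismountability argument anyway.
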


\begin{proof}
  Let us prove this for minimum edges (a symmetrical argument applies for maximum edges). Consider an emitter $u$ whose minimum edge $\{u,v\}=e^-(u)$ leads to collector $v$ such that $\{u,v\}\neq e^-(v)$ in the bipartite graph. Then an edge with a smaller label exists between $v$ and another emitter $u'$ such that $\{u',v\}= e^-(v)$, which creates a $2$-hop journey from $u'$ to $u$ arriving through $e^-(u)$, implying that $u'$ can delegate its emissions to $u$. 
Moreover, emitters and collectors are disjoint, thus $u'$ is not a collector. As a result, $u'$ already delegates its receptions to a collector (through a single edge), thus $u'$ is $2$-hop dismountable.
\end{proof}

\noindent There are two subcases of Case~2 depending on whether or not the hypothesis of Lemma~\ref{lem:even-more} holds.

\begin{subcase}[$H$ contains a 2-hop dismountable vertex $v$]
\label{subcase:dismountable}
The algorithm 2-hop dismounts $v$, the corresponding edges (at most four) are selected for future inclusion in the spanner, and the algorithm recurses on $\G[V\setminus v]$.
\end{subcase}
\begin{subcase}[The edges of the matchings $S^-$ and $S^+$ are minimum (resp. maximum) {\em on both sides}]
\label{subcase:not-dismountable}
An example of this subcase is given in Figure~\ref{fig:bipartite_matchings}.
\end{subcase}

In summary, either the algorithm recurses until the input clique is fully dismounted (through Case~1 and Subcase~\ref{subcase:dismountable}), resulting in a spanner of size $O(n)$ (\cref{fact:k-hop-number}), or the {\em recursion stops} and the residual instance is sparsified further by a dedicated procedure for Subcase~\ref{subcase:not-dismountable}, described now.


\end{case}

\begin{figure}[htb]
\centering
\newcommand{\hexafull}{
\begin{tikzpicture}[scale=2.4]
    \path (0,0) coordinate (m);
    \path (m)+(90:1) node[defnode] (a){};
    \path (m)+(45:1) node[defnode] (b){};
    \path (m)+(0:1) node[defnode] (c){};
    \path (m)+(-45:1) node[defnode] (d){};
    \path (m)+(-90:1) node[defnode] (e){};
    \path (m)+(-135:1) node[defnode] (f){};
    \path (m)+(180:1) node[defnode] (g){};
    \path (m)+(135:1) node[defnode] (h){};
    \tikzstyle{every node}=[circle,fill=white,inner sep=1.3pt,font=\footnotesize]
    \tikzstyle{every path}=[shorten >= 5pt, shorten <= 5pt]
    \tikzstyle{chain}=[ultra thick, shorten <= 5pt, shorten >= 5pt, green]
	\draw (a)-- node[pos=.5]{26}(b);
    \draw (a) -- node[pos=.2]{2}(c);
    \draw (a) -- node[pos=.5]{19}(d);
    \draw (a) -- node[pos=.2]{13}(e);
    \draw (a) -- node[pos=.5]{27}(f);
    \draw (a) -- node[pos=.82]{17}(g);
    \draw (a) -- node[pos=.5]{3}(h);
    \draw (b) -- node[pos=.5]{0}(c);
    \draw (b) -- node[pos=.2]{5}(d);
    \draw (b) -- node[pos=.5]{15}(e);
    \draw (b) -- node[pos=.2]{12}(f);
    \draw (b) -- node[pos=.5]{11}(g);
    \draw (b) -- node[pos=.8]{1}(h);
    \draw (c) -- node[pos=.5]{23}(d);
    \draw (c) -- node[pos=.17]{10}(e);
    \draw (c) -- node[pos=.5]{9}(f);
    \draw (c) -- node[pos=.8]{8}(g);
    \draw (c) -- node[pos=.5]{24}(h);
    \draw (d) -- node[pos=.42]{20}(e);
    \draw (d) -- node[pos=.2]{4}(f);
    \draw (d) -- node[pos=.5]{25}(g);
    \draw (d) -- node[pos=.2]{22}(h);
    \draw (e) -- node[pos=.5]{21}(f);
    \draw (e) -- node[pos=.2]{6}(g);
    \draw (e) -- node[pos=.5]{14}(h);
    \draw (f) -- node[pos=.5]{18}(g);
    \draw (f) -- node[pos=.2]{7}(h);
    \draw (g) -- node[pos=.5]{16}(h);
  \end{tikzpicture}
  $\to\quad$
  \begin{tikzpicture}[scale=1.5]
  \path (0,0) coordinate (m);
  \path (-1, -2) node[emitter] (G1){};
  \path (-1, -1) node[emitter] (G2){};
  \path (-1, 0) node[emitter] (G3){};
  \path (-1, 1) node[emitter] (G4){};
  \path (1, -2) node[receiver] (R1){};
  \path (1, -1) node[receiver] (R2){};
  \path (1, 0) node[receiver] (R3){};
  \path (1, 1) node[receiver] (R4){};
    \tikzstyle{every node}=[circle,fill=white,inner sep=0.4pt,font=\footnotesize]
    \tikzstyle{every path}=[shorten >= 5pt, shorten <= 5pt]
    \tikzstyle{chain}=[ultra thick, shorten <= 5pt, shorten >= 5pt, green]
    
    \draw (G1) edge[chain, color=red] node[pos=.2]{21}(R1);
    \draw (G1) edge[chain, color=green] node[pos=.3]{4}(R2);
    \draw (G1) -- node[pos=.4]{9}(R3);  
    \draw (G1) -- node[pos=.18]{12}(R4);  
    
    \draw (G2) edge[chain, color=green] node[pos=.4]{6}(R1);  
    \draw (G2) edge[chain, color=red] node[pos=.2]{25}(R2);  
    \draw (G2) -- node[pos=.25]{8}(R3);  
    \draw (G2) -- node[pos=.15]{11}(R4);  
    
    \draw (G3) -- node[pos=.15]{14}(R1);  
    \draw (G3) -- node[pos=.25]{22}(R2);
    \draw (G3) edge[chain, color=red] node[pos=.2]{24}(R3);
    \draw (G3) edge[chain, color=green] node[pos=.4]{1}(R4);
    
    \draw (G4) -- node[pos=.18]{13}(R1);
    \draw (G4) -- node[pos=.4]{19}(R2);
    \draw (G4) edge[chain, color=green] node[pos=.3]{2}(R3);
    \draw (G4) edge[chain, color=red] node[pos=.2]{26}(R4);
  \end{tikzpicture}
}
\readlist\arg{10 7 3 8 6 0 5 12 13 2 4 11 9 14 1}
\hexafull
\caption{\label{fig:bipartite_matchings}Temporal clique for which the output of the fireworks technique results in Subcase~\ref{subcase:not-dismountable} in which every vertex is either an emitter or a collector and the edges of the matchings are minimum ({\it resp.} maximum) on both sides. The minimum edges are depicted in light green and maximum edges in dark red.}
\end{figure}

\subsection{Sparsifying the Residual Instance}

For simplicity, the sparsification of the residual instance is considered as a separate problem. The input is a labelled complete bipartite graph $B=(X^-,X^+,E_B)$ where $X^-$ is the set of emitters, $X^+$ is the set of collectors, and the labels are inherited from $\G$. There are two perfect matchings $S^-$ and $S^+$ in $B$ such that the labels of the edges in $S^-$ (resp. $S^+$) are minimum (resp. maximum) locally to both of their endpoints (Lemma~\ref{lem:even-more}). 
The objective is to remove as many edges as possible from $E_B$, while preserving $S^-$, $S^+$, and the fact that every emitter can reach {\em all} collectors by a journey. Indeed, these three properties ensure temporal connectivity of the graph (using the same arguments as in Theorem~\ref{th:bidirectional}).

While both $S^-$ and $S^+$ are matchings, our algorithm only exploits this property with respect to $S^+$ as follows. 

\begin{lemma}
  \label{fact:E2E}
  In $B$, if an emitter $u$ can reach another emitter $v$, then it can reach the collector $w$ corresponding to $v$ by adding the edge $(v,w) \in S^+$ to the journey from $u$ to $v$.
\end{lemma}

\begin{proof}
Edge $(v, w) \in S^+$, thus its label is locally maximum to $v$ in $B$ (\cref{lem:even-more}). This implies that edge $(v, w)$ may be added to any journey to $v$, resulting in a journey to $w$. Observe that, in the particular case that $(u,w)$ is in $S^-$, then this edge is already in the journey from $u$ to $v$ (with the consequence that an edge is saved).
\end{proof}

This property makes it possible to reduce the task of reaching some collectors to that of reaching their corresponding emitters in $S^+$. 
It is however impossible for an emitter $u$ to make a complete delegation to another emitter $v$, because the existence of a journey from $u$ to $v$ arriving through $e^-(v)$ would contradict the fact that $S^-$ is also a matching. For this reason, when a journey from emitter $u$ arrives at emitter $v$, some of $v$'s edges have already disappeared. Nevertheless, the algorithm exploits such {\em partial} delegations, while paying extra edges for the missed opportunities (contained within a logarithmic factor). This is done by means of an {\em iterative} procedure called {\em layered delegations}, described in the remainder of this section. Note the term iterative, not recursive; from now on, the instance has a fixed vertex set and it is sparsified until the final bound is reached.

\subsubsection{Layered Delegations}

The algorithm proceeds by {\em eliminating} half of the emitters in each step $j$, while selecting a set $S_j$ of edges for inclusion in the spanner, such that the eliminated emitters can reach all collectors using either single edges or indirect journeys through other emitters (partial delegations). The set of non-eliminated emitters at step $j$ (called {\em alive}) is denoted by $X^-_j$, with $X^-_1=X^-$. The set of collectors $X^+$ is invariant over the execution. We denote by $k=n/2$ the initial degree of the emitters in $B$ (one edge shared with each collector), and by $e^i(v)$ the edge with the $i^{th}$ smallest label  (label of {\em rank} $i$) locally to a vertex $v$, in particular $e^1(v)=e^-(v)$ and $e^k(v)=e^+(v)$.

The $k$ ranks are partitioned into subintervals of doubling size ${\cal I}_j=[2^{j+2}-7, 2^{j+3}-8]$, where $j$ denotes the current step of the iteration, ranging from $1$ to $\log_2 k-3$. For simplicity, assume that $k$ is a power of two; we explain below how to adapt the algorithm when this is not the case (see Remark~\ref{rem:power-of-2}). For example, if $k=128$, then ${\cal I}_1=[1,8], {\cal I}_2=[9,24], {\cal I}_3=[25,56],$ and ${\cal I}_4=[57,120]$.
Computation step $j$ is made with respect to the subgraph $B_j=(X^-_j,X^+,E_j)$ where $E_j=\{e^i(v) \in E_B: i\in {\cal I}_j, v\in X^-_j\}$, namely the edges of the currently alive emitters, whose ranks are in the interval ${\cal I}_j$.

\begin{lemma}
  \label{lem:half-eliminated}
  In each step $j$, 
  $X^-_j$ can be split into two sets $X_a$ and $X_b$ such that $|X_a| \ge |X_b|$ and every vertex in $X_a$ can reach a vertex in $X_b$ through a $2$-hop journey (within $B_j$).
\end{lemma}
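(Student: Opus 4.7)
The plan is to construct $X_b$ via a grouping argument driven by a canonical edge choice per emitter. For each alive emitter $u \in X_j^-$, let $c_u$ denote the collector endpoint of $e^{\min \mathcal{I}_j}(u)$, the smallest-label edge of $u$ in $B_j$. This defines a map $f : X_j^- \to X^+$, and I would partition $X_j^-$ into fibers $V_c = f^{-1}(c)$.

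The reachability part is immediate. Within any fiber $V_c$ of size at least $2$, let $u_c^*$ be the emitter in $V_c$ maximizing the label of the edge $\{u,c\}$. Then every other $u \in V_c$ reaches $u_c^*$ by the $2$-hop journey $u \to c \to u_c^*$: both edges $\{u,c\}$ and $\{u_c^*,c\}$ lie in $E_j$ by construction (each has rank $\min \mathcal{I}_j \in \mathcal{I}_j$ at its respective emitter endpoint), and by choice of $u_c^*$ one has $\mathrm{label}(u,c) < \mathrm{label}(u_c^*,c)$, giving a valid $2$-hop journey in $B_j$. Placing one leader $u_c^*$ per nontrivial fiber into $X_b$ therefore satisfies the reachability half of the lemma for all $u$ outside singleton fibers.

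The hard part — and the step I would expect to require real work — is the size bound $|X_b| \le |X_j^-|/2$. The natural mapping $f$ may be nearly injective, in which case most fibers are singletons and the basic leader construction fails to pair up enough emitters. To close this gap, my plan is to exploit the richness of $B_j$: each alive emitter has $|\mathcal{I}_j| = 2^{j+2}$ edges in $B_j$, while there are only $k = n/2$ collectors, so by pigeonhole the total edge count $|X_j^-| \cdot 2^{j+2}$ forces many collectors $c$ to have degree $d_c \ge 2$ in $B_j$. I would iterate the grouping: if $u$'s fiber under the smallest-edge mapping is a singleton, reassign $u$ using its second-smallest edge of $B_j$, then its third-smallest, and so on, each time attempting to place $u$ into a nontrivial fiber rooted at some collector $c'$. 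Processing emitters in a fixed order (say, increasing global label of their current assigned edge) should prevent the reassignment from creating new singleton situations.

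The main obstacle is making the iterative reassignment argument both correct and tight. Specifically, one has to ensure that (i) when a singleton $u$ is reassigned to an edge to collector $c'$, it truly lands in a nontrivial fiber at $c'$ with some existing leader of larger label (so that $u$ is actually $2$-hop dominated rather than becoming a new leader), and (ii) the accumulated count of leaders across all reassignment rounds still satisfies $|X_b| \le |X_j^-|/2$. A clean way to discharge (i) and (ii) would be to fix a global total order on $X_j^-$ (e.g.\ by the label of $e^{\min \mathcal{I}_j}$) and process emitters from highest to lowest, so each emitter is either already covered or placed in $X_b$ without disturbing earlier choices, then to argue via the average collector degree $|X_j^-|\cdot 2^{j+2}/k \ge 8$ that at most half the emitters are ever forced into $X_b$.
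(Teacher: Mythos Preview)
Your proposal has the right intuition (two-hop journeys through a common collector, and an average-degree argument using $|X_j^-|\cdot|\mathcal{I}_j|/k = 8$), but there is a genuine gap in how you organize the argument. By committing each emitter to a \emph{single} edge (its smallest in $B_j$) before grouping, you throw away almost all of the density that makes the lemma true: with only one edge per emitter the average collector degree under your map $f$ is $|X_j^-|/k = 2^{1-j}\le 1$, so generically \emph{every} fiber is a singleton. Your ``iterative reassignment'' is meant to fix this, but as stated it is not well-defined: after reassigning $u$ to its second-smallest edge at some collector $c'$, there is no reason any other emitter's \emph{smallest} edge points to $c'$, so you are not landing in a nontrivial fiber unless you simultaneously relax the fibering for everyone else. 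Your proposed processing order (highest to lowest by the label of the smallest edge) does not obviously help either, since the emitter you want $u$ to reach must have a \emph{larger} label at the shared collector, and that edge need not be anyone's smallest.

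The paper avoids this detour by never restricting to one edge per emitter. It works directly in $B_j$ with \emph{all} edges of rank in $\mathcal{I}_j$, and runs a greedy loop over \emph{collectors}: repeatedly pick the collector $c$ of maximum current degree $d$, put the emitter whose edge to $c$ has the largest label into $X_b$, put the other $d-1$ emitters meeting at $c$ into $X_a$, delete these $d$ emitters, and continue. The two-hop reachability is then immediate for each batch. The size bound comes from a phased pigeonhole: while more than a $7/8$ fraction of $X_j^-$ is unprocessed, the remaining edges still give average collector degree $>7$, so some collector has degree $\ge 8$ and the batch contributes a $7/8$ fraction to $X_a$; once below that threshold one is guaranteed degree $\ge 7$ (contributing a $6/7$ fraction), and so on down to degree $2$. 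Summing $\tfrac{1}{8}\bigl(\tfrac{7}{8}+\tfrac{6}{7}+\cdots+\tfrac{1}{2}\bigr)\approx 0.66 \ge \tfrac12$ gives $|X_a|\ge |X_j^-|/2$. If you want to salvage your approach, the fix is exactly this: drop the one-edge-per-emitter restriction and be greedy on collector degree rather than on an emitter ordering.
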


\begin{proof}
This proof is illustrated in Figure~\ref{fig:proof_lemma_7} for the particular case that $j=1$ (first step).   We will show that the average degrees of collectors in $B_j$ forces the existence of sufficiently many two-hop journeys among emitters.
  To start, observe that if a collector $v$ shares an edge with $d$ emitters in $B_j$ (we say that these emitters {\em meet} at $v$), then the emitter whose edge with $v$ has the largest label can be reached by the $d-1$ other emitters through two-hop journeys. Thus, the proof proceeds by showing that, in each step $j$, the distribution of degrees over collectors forces the existence of sufficiently many such meetings among emitters. 
Here, the size of the first interval ${\cal I}_j$ matters, because if one starts with intervals of size only $2$ or $4$ (say), then the density of edges remains insufficient for the argument to apply, and starting with $8$ (or any larger constant power of two) suffices, while not impacting the asymptotic cost, as we will see. Also observe that the doubling size of the rank intervals cancels out the halving size of $X_j^-$ over the steps, leading to an average degree for collectors (in the considered range of ranks) that remains constant over the steps (namely $8$).

The calculation relative to step $j$ is itself based on an iterative argument that one should be careful not to confuse with the outer loop where $j$ varies. Thus, keeping $j$ fixed for the rest of the proof, $X_a$ and $X_b$ are built iteratively as follows: find the collector $c$ with highest degree and add all the corresponding emitters to $X_a$ except the one whose edge with $c$ has largest label, which is added to $X_b$; eliminate these emitters and repeat until $X_a \ge X_j^-/2$; then add the remaining emitters to $X_b$. To see why this works, observe that the average degree of $8$ for collectors implies that at least one collector has degree at least $8$. In fact, this is true as long as the number of unprocessed emitters in $X_j^-$ (i.e. not yet in $X_a$ or $X_b$) is larger than $7/8 \cdot |X_j^-|$ (by the pigeon-hole principle). When $1/8$ of the emitters have been processed, we thus have at least $7/8$ of this $1/8$ fraction in $X_a$, i.e. $X_a$ contains at least $7/8\cdot1/8\cdot |X_j^-|$ (note that the emitters are not removed from $X_j^-$, just marked as processed and no longer counted for the degrees of collectors). An analogous argument implies that at least one collector has degree at least $7$ as long as the fraction of unprocessed emitters in $X_j^-$ remains above $6/8$, resulting in at least $6/7\cdot 1/8\cdot |X_j^-|$ more emitters in $X_a$ when the next threshold is attained, and resulting (again, for the sake of illustration) in at least $5/6 \cdot 1/8\cdot |X_j^-|$ more emitters when $5/8$ unprocessed emitters remain. By iterating this argument, the size of $X_a$ ends up being a fraction of $X_j^-$ at least equal to $1/8 \cdot (7/8 + 6/7 + 5/6 + 4/5 + 3/4 + 2/3 + 1/2) \simeq 0.66 \ge 0.5$.
\end{proof}

\begin{figure}[htb]
\centering
\newcommand{\hexafull}{
\begin{subfigure}[b]{0.25\textwidth}
  ~~
  \resizebox{\linewidth}{!}{
  \begin{tikzpicture}[xscale=.55,yscale=.43]
   \tikzstyle{every node}=[minimum size = .5cm]
  \path (0,0) coordinate (m);
  \path (-5, -5) node[emitter] (Gnn){};
  \path (-5, -4) node[emitter] (Gn){};
  \path (-5, -3) node[emitter] (G0){};
  \path (1, -5) node[receiver] (Rnn){};
  \path (1, -4) node[receiver] (Rn){};
  \path (1, -3) node[receiver] (R0){};
  \path (1, 1) node[receiver] (Rc){};
  \path (1, 2) node[invisible, rotate=90] (Rb){...};
  \path (1, 3) node[receiver] (Ra){};
  \path (-5, -2) node[emitter] (G1){};
  \path (-5, -1) node[invisible, rotate=90] (G2){...};
  \path (-5, 0) node[emitter] (G3){};
  \path (-5, 1) node[emitter] (G4){};
  \path (-5, 2) node[invisible, rotate=90] (G5){...};
  \path (-5, 3) node[emitter] (G6){};
  \path (-5, 4) node[emitter] (G7){};
  \path (-5, 5) node[emitter] (G8){};
  \path (-5, 6) node[invisible, rotate=90] (G9){...};
  \path (-5, 7) node[emitter] (G13){};
  \path (-5, 8) node[emitter] (G14){};
  \path (-5, 9) node[emitter] (G15){};
  \path (-5, 10) node[emitter] (G16){};
  \path (1, 10) node[receiver] (R1){};
  \path (1, 9) node[receiver] (R2){};
  \path (1, 8) node[receiver] (R5){};
  \path (1, 7) node[invisible, rotate=90] (R55){...};
  \path (1, 5) node[receiver] (R3){};
  \path (1, 4) node[receiver] (R4){};
  \path (1, -2) node[receiver] (R7){};
  \path (1, -1) node[invisible, rotate=90] (R8){...};
  \path (1, 6) node[receiver] (R10){};
  \path (1, 0) node[receiver] (R9){};
  
  \tikzstyle{every node}=[circle,fill=white,inner sep=1.3pt]
  \tikzstyle{every path}=[shorten >= 5pt, shorten <= 5pt]
  \tikzstyle{aux}=[dashed, gray, shorten <= 5pt, shorten >= 5pt]
  \tikzstyle{chain}=[thick, shorten <= 5pt, shorten >= 5pt]  
  
  \draw (G16) edge[chain, reverse directed] (R1);
  \draw (G15) edge[chain, directed] (R1);
  \draw (G14) edge[chain, directed] (R1);
  \draw (G13) edge[chain, directed](R1);
  \path (R1)+(-145:3) node[invisible] (v10){};
  \path (R1)+(-136:3) node[invisible] (v11){};
  \path (R1)+(-120:2) node[invisible, rotate=90] (v12){...};
  \path (R1)+(-118:3) node[invisible] (v13){};
  \draw [chain, reverse directedd] (R1) -- (v10);
  \draw [chain, reverse directedd] (R1) -- (v11);

  \draw (G8) edge[chain, reverse directed] (R3);
  \draw (G7) edge[chain, directed](R3);
  \draw (G6) edge[chain, directed] (R3);
  \path (R3)+(-152:3) node[invisible] (v70){};
  \path (R3)+(-146:3) node[invisible] (v71){};
  \path (R3)+(-137:2) node[invisible, rotate=90] (v72){...};
  \path (R3)+(-128:3) node[invisible] (v73){};
  \draw [chain, reverse directedd] (R3) -- (v70);
     
  \end{tikzpicture}
  }  
          \caption{Emitters ``meeting'' at collectors.
}
\label{subfig:adding_to_Vf}
\end{subfigure}
\qquad\qquad
\begin{subfigure}[b]{0.49\textwidth}
  \resizebox{\linewidth}{!}{
    \begin{tikzpicture}[xscale=.55,yscale=.43]
      \path (0,0) coordinate (m);
      \path (-5, -2) node[emitter] (G1){};
      \path (-5, -1) node[invisible, rotate=90] (G2){...};
      \path (-5, 0) node[emitter] (G3){};
      \path (-5, 1) node[emitter] (G4){};
      \path (-5, 2) node[emitter] (G5){};
      \path (-5, 3) node[invisible, rotate=90] (G6){...};
      \path (-5, 4) node[emitter] (G7){};
      \path (-5, 5) node[emitter] (G8){};
      \path (-5, 6) node[emitter] (G9){};
      \path (-5, 7) node[invisible, rotate=90] (G10){...};
      \path (-5, 8) node[emitter] (G11){};
      \path (-5, 9) node[invisible, rotate=90] (G12){...};
      \path (-5, 10) node[emitter] (G13){};
      \path (-5, 11) node[emitter] (G14){};
      \path (-5, 12) node[invisible, rotate=90] (G15){...};
      \path (-5, 13) node[emitter] (G16){};
      
      \draw[dashed] (-3,10.75) -- (-2.5,10.75) -- (-2.5,13.5) -- (-3, 13.5);
      \node[text width=2cm,align=center,draw, invisible,rectangle] at (0, 12)(){$\geq \frac{1}{8} \times \frac{7}{8} |X^-_j|$};
      \draw[dashed] (-3,7.75) -- (-2.5,7.75) -- (-2.5,10.25) -- (-3, 10.25);
      \node[text width=2cm,align=center,draw, invisible] at (0, 9)(){$\geq \frac{1}{8} \times \frac{6}{7} |X^-_j|$};
      \node[text width=1cm,align=center,draw, invisible, rotate=90] at (-2, 7)(){...};
      \draw[dashed] (-3,1.75) -- (-2.5,1.75) -- (-2.5,4.25) -- (-3, 4.25);
      \node[text width=2cm,align=center,draw, invisible] at (0, 3)(){$\geq \frac{1}{8} \times \frac{1}{2} |X^-_j|$};
      \node[text width=2cm,align=center,draw, invisible] at (6, 6)(){};
      
      \draw[dashed] (-3,-1.75) -- (-2.5,-1.75) -- (-2.5,.8) -- (-3, .8);
      \node[text width=.5cm,align=center,draw, invisible] at (-1, -.3)(){$\emptyset$};

      \draw [decorate,decoration={brace,amplitude=10pt,mirror,raise=4pt},yshift=0pt]
      (2,1.5) -- (2,13.5) node [text width =2.5cm, black,midway,xshift=2cm] {
        $|X_a| \geq |X^-_j|/2$}; 
      
    \end{tikzpicture}
  }  
  \caption{Calculation of the size of $X_a$ with respect to the set of emitters $X_j^-$.}
  \label{subfig:computing_deleted_nodes}
\end{subfigure}
}
\readlist\arg{10 7 3 8 6 0 5 12 13 2 4 11 9 14 1}
\hexafull
\caption{\label{fig:proof_lemma_7} Illustration of the method used in the proof of Lemma~\ref{lem:half-eliminated}.
}
\end{figure}

\begin{remark}
\label{rem:power-of-2}
  The computation of $X_a$ (described in the proof) proceeds by repeatedly considering the largest degree $d$ of a collector and assigning $d-1$ of the corresponding emitters to $X_a$ and one to $X_b$; it is therefore a greedy algorithm. The process is to be stopped whenever $X_a$ reaches half the size of $X_j^-$. 
  If $X_a$ exceeds this threshold during step $j$, then some emitters can be arbitrarily transferred from $X_a$ to $X_b$ to preserve the fact that $|X_{j+1}^-|$ is a power of two. The case that $|X_1^-|=k$ is not a power of two is addressed similarly after the first iteration, in order to set the size of $X_b$ to the highest power of two below $k$.
\end{remark}

\noindent{\bf How $X_a$ and $X_b$ are then used:} When an emitter $u$ in $X_a$ reaches another emitter $v$ in $X_b$, the corresponding journey arrives at $v$ through some edge $e^i(v)$ with $i \in {\cal I}_j$. We say that $u$ \emph{partially delegates} its emissions to $v$ in the sense that all collectors that $v$ can reach after this time can be reached from $u$ (\cref{fact:E2E}), the other collectors being possibly no longer reachable from $v$ after this time. 

\begin{lemma}
  \label{lem:direct-edges}
  If an emitter $u$ makes a partial delegation to $v$ in step $j$, then the number of collectors that $u$ may no longer reach through $v$ is at most $2^{j+3}-8$.
\end{lemma}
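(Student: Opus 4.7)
\bigskip

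The plan is a direct counting argument on the rank of the edge at $v$ through which the partial delegation arrives, so I expect no real obstacle beyond carefully unpacking the definitions.

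First I would unpack what a partial delegation from $u$ to $v$ in step $j$ means. By Lemma~\ref{lem:half-eliminated}, it corresponds to a two-hop journey in $B_j$ of the form $u \to c \to v$, where $c \in X^+$ and the second hop is an edge of $E_j$, hence of the form $\{c,v\}=e^i(v)$ with $i \in {\cal I}_j = [2^{j+2}-7,\,2^{j+3}-8]$. The arrival time of the delegation at $v$ is therefore $\lambda(e^i(v))$.

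Next I would characterize which collectors remain reachable from $v$ after the delegation arrives. Since the input is a simple temporal clique, all journeys are strict, and the labels of the edges incident to $v$ are pairwise distinct. Thus from $v$ at time $\lambda(e^i(v))$, the only single-edge continuations towards a collector are the edges whose rank at $v$ is strictly greater than $i$, namely those of rank $i+1,\dots,k$. In particular, at most $i$ collectors (those joined to $v$ by the edges of rank $1,\dots,i$) may fail to be reached from $v$ after the delegation, and hence may fail to be reached by $u$ through $v$.

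Finally, plugging in the range of $i$ yields the bound. Because $i \in {\cal I}_j$, we have $i \le 2^{j+3}-8$, so the number of collectors that $u$ may no longer reach through $v$ is at most $2^{j+3}-8$, as claimed. The only points to watch are (i) that strictness of journeys is guaranteed by the simple-labeling reduction of Section~\ref{sec:preliminaries}, so the count of ``useable continuations'' at $v$ really is $k-i$ rather than $k-i+1$, and (ii) that this upper bound is independent of which other edges of $v$ happen to be present in the eventual spanner, since the obstruction is purely temporal.
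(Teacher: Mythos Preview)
Your proposal is correct and follows essentially the same argument as the paper: the journey arrives at $v$ through an edge whose rank at $v$ lies in ${\cal I}_j$, hence is at most $2^{j+3}-8$, and all edges of higher rank at $v$ remain available afterwards. The paper's proof is terser but identical in substance; your remarks on strictness and on the collector $c$ at rank $i$ already being visited correspond to the paper's parenthetical aside that the bound is in fact slightly loose.
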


\begin{proof}
  This number is the largest value in the current interval; it corresponds to the largest rank of the edge through which the journey from $u$ may have arrived at $v$. Every edge whose rank locally to $v$ is larger than $2^{j+3}-8$ can still be used and thus the corresponding collectors are still reachable. This also implies that all other edges incident to $v$ (so with rank lower than $2^{j+3}-8$) may have already disappeared.
\end{proof}

\cref{lem:direct-edges} implies the following.

\begin{lemma}
  \label{lem:cost}
  In each step $j$, at most $2^{j+3}$ edges are selected relative to every eliminated emitter.
\end{lemma}

\begin{proof}
   Every eliminated emitter $u$ delegates to some emitter $v$. This is done through a 2-hop journey (composed of 2 edges). Additionally, through this delegation, $u$ may not be able to reach at most $2^{j+3}-8$ collectors, for which at most $2^{j+3}-8$ direct edges from $u$ are selected. The total amount of edges used for this partial delegation is thus at most $2^{j+3}-6 \leq 2^{j+3}$.
\end{proof}

More globally, let $J_j$ be the edges corresponding to all of the journeys used for partial delegation from vertices in $X_a$ to vertices in $X_b$ in step $j$, and $D_j$ the union of edges between the emitters in $X_a$ and the missed collectors. Let $S_j = J_j \cup D_j$. The algorithm thus consists of selecting all the edges in $S_j$ for inclusion into the spanner. Then $X^-_{j+1}$ is set to $X_b$ and the iteration proceeds with the next step.
The computation goes for $j$ ranging from $1$ to $\log_2 k - 3$, which leaves exactly {\em eight} final emitters alive. All the remaining edges of these emitters (call them $S_{last}$) are finally selected.
Overall, the final spanner is the union of all selected edges, plus the edges corresponding to the two initial matchings, \ie $S=(\cup_j S_j) \cup S_{last} \cup S^- \cup S^+$.

\begin{theorem}
  \label{th:layered-number}
  $S$ is a temporal spanner of the complete bipartite graph $B$ and it has $\Theta(n \log n)$ edges.
\end{theorem}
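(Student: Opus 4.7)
The plan is to split the proof into two parts: validity (every emitter in $B$ can reach every collector via a journey in $S$) and the edge count ($|S| = \Theta(n\log n)$). For validity, I would fix an emitter $u \in X^-$ and a collector $c \in X^+$, and trace the delegation chain $u = u_0, u_1, \ldots, u_r$, where $u_l$ is eliminated at step $j_l$ by partial delegation to $u_{l+1}$ and $u_r$ is one of the eight final emitters (so every edge incident to $u_r$ lies in $S_{last}$). Since any delegated emitter survives at least one further step before its own elimination, $j_0 < j_1 < \cdots < j_{r-1}$. The chain hops---each a two-hop journey drawn from $J_{j_l} \subseteq S$---concatenate temporally because the hop at step $j_l$ arrives at $u_{l+1}$ via an edge of rank at most $M_{j_l} := 2^{j_l+3}-8$ at $u_{l+1}$, whereas the subsequent hop departs $u_{l+1}$ at rank at least $m_{j_{l+1}} := 2^{j_{l+1}+2}-7$, and $j_{l+1} > j_l$ forces $m_{j_{l+1}} > M_{j_l}$, ensuring the required strict label increase.

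Next I would choose an ``exit index'' $l^* \in \{0, 1, \ldots, r\}$ to terminate the chain at $c$ via the direct edge $\{u_{l^*}, c\}$. Letting $r_l$ denote the rank of $c$ at $u_l$, define $b_l := 1$ if $r_l > M_{j_{l-1}}$ and $b_l := 0$ otherwise (for $l \geq 1$), with the convention $b_0 := 1$. The edge $\{u_{l^*}, c\}$ lies in $S$ when either $l^* = r$ (taken from $S_{last}$) or $l^* < r$ and $c$ is missed at step $j_{l^*}$ (so it lies in $D_{j_{l^*}}$), and the latter is equivalent to $b_{l^*+1} = 0$. Temporal compatibility with the chain arrival at $u_{l^*}$ additionally requires $b_{l^*} = 1$ when $l^* \geq 1$, extended to $l^* = 0$ by the convention. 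If $b_r = 1$, exit at $l^* = r$. Otherwise $b_r = 0$ and $b_0 = 1$, so the binary sequence $b_0, b_1, \ldots, b_r$ must descend from $1$ to $0$; taking $l^*$ as the largest index with $b_{l^*} = 1$ yields $b_{l^*+1} = 0$ and a valid exit, so $u$ reaches $c$ in $S$.

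For the edge count, Fact~\ref{lem:cost} gives $|S_j| \leq |X_a^{(j)}| \cdot 2^{j+3}$; since the halving maintained by the algorithm yields $|X_a^{(j)}| = k/2^j$ with $k = n/2$, this bound is $|S_j| \leq 8k$ per step. Summing over $j = 1, \ldots, \log_2 k - 3$ gives $O(k\log k)$ edges, to which $|S_{last}| = 8k$ and $|S^-| + |S^+| = 2k$ add only lower-order terms, for a total of $\Theta(n\log n)$. The hard part will be the chain-exit argument: the existence of $\{u_{l^*}, c\}$ in $S$ depends on the rank of $c$ at $u_{l^*+1}$, whereas temporal usability depends on the rank at $u_{l^*}$---two conditions on different endpoints. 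The indicator $b_l$ unifies them into one binary sequence whose forced boundary values ($b_0 = 1$ and, when $b_r = 0$, a guaranteed descent) yield at least one valid exit, while the doubling rank intervals $\mathcal{I}_{j_l}$ ensure both the temporal concatenation of chain hops and the per-step edge bound.
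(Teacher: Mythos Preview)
Your proof is correct and follows the same strategy as the paper's: validity via transitivity along the delegation chain, size via the per-step bound of Fact~\ref{lem:cost} summed over $\Theta(\log n)$ steps. The paper compresses the validity argument into one sentence (``eliminated emitters reach all collectors either directly or through an emitter that can still reach this collector afterwards; this applies transitively''), whereas you unroll that transitivity explicitly by tracking the chain $u_0,\dots,u_r$ and locating an exit index via the binary indicators~$b_l$. This is the same idea made rigorous---your formulation pins down exactly why the concatenation is temporally valid (the disjoint, increasing intervals~$\mathcal I_j$ force $m_{j_{l+1}}>M_{j_l}$) and why a usable direct edge to~$c$ must appear somewhere along the chain (the boundary conditions $b_0=1$, $b_r\in\{0,1\}$ force a descent).

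One small point worth being aware of: your equivalence ``$c$ is missed at step~$j_{l^*}$ iff $b_{l^*+1}=0$'' reads the algorithm as selecting direct edges to \emph{all} collectors of rank at most $M_{j_{l^*}}$ at~$u_{l^*+1}$, rather than only those of rank at most the actual arrival rank~$i_{l^*}$. The paper's text is slightly ambiguous here, but your reading is consistent with Lemma~\ref{lem:direct-edges} and Fact~\ref{lem:cost}. Should the algorithm instead use the actual arrival rank, the identical argument works after replacing $M_{j_{l-1}}$ by~$i_{l-1}$ in the definition of~$b_l$; nothing else changes.
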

\begin{proof}[Proof]
  The key observation for establishing {\em validity} of the spanner is that eliminated emitters reach all collectors either directly or through an emitter that can still reach this collector {\em afterwards}. This property applies transitively (thanks to the disjoint and increasing intervals) until eight emitters remain, all the edges of which are selected for simplicity. Therefore, every initial emitter can reach all collectors. The rest of the arguments are the same as in the proof of Theorem~\ref{th:bidirectional}: all vertices in the input clique can reach at least one emitter $u$ through $e^-(u)$, and be reached by at least one collector $v$ through $e^+(v)$.
  
  Regarding the size of the spanner, in step $j$, $\frac{k}{2^{j}}$ emitters are eliminated and at most $2^{j+3}$ edges are selected for each of them (\cref{lem:cost}), amounting to at most $8k=4n$ edges. The number of iterations is $\Theta(\log k)=\Theta(\log n)$. Finally, the sets $S_{last}, S^-,$ and $S^+$ each contain $\Theta(n)$ edges (and $S^+$ is actually included in $S_{last}$). 
\end{proof}

\begin{theorem}
\label{theorem:nlogn_clique}
Simple temporal cliques always admit $O(n\log n)$-sparse spanners.
\end{theorem}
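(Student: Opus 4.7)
The plan is to assemble the ingredients developed in Sections~\ref{sec:delegation}--\ref{sec:iter_deleg} into a single algorithm and argue that its output has size $O(n\log n)$. By the reductions of Section~\ref{sec:preliminaries}, it suffices to treat simple temporal cliques, so throughout we can assume $\G=(G,\lambda)$ is a simple temporal clique on $n$ vertices.

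First, I would apply the bidirectional fireworks construction of Section~\ref{sec:bidirectional} to $\G$, producing the two sets of trees ${\cal T}^-$ and ${\cal T}^+$ (contributing $O(n)$ edges in total) together with the sets $X^-$ of emitters and $X^+$ of collectors. From here, I would follow the dichotomy of Section~\ref{sec:iter_deleg}. If $X^- \cup X^+ \ne V$, then by Case~\ref{case:em_rec_neq_V}, some vertex $v$ is $2$-hop dismountable; I would commit the (at most four) edges of the two short journeys witnessing its dismountability to the spanner, delete $v$, and recurse on $\G[V\setminus v]$. Each such step costs $O(1)$ edges and reduces the number of vertices by one, so the cumulative cost of all dismountability operations is $O(n)$. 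The recursion terminates either when the graph is fully dismounted (in which case we have $O(n)$ edges by Fact~\ref{fact:k-hop-number} and we are done) or when the procedure falls into Case~\ref{case:em_rec_eq_V}, i.e., when $X^- \cup X^+ = V$ at the current recursive level.

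When Case~\ref{case:em_rec_eq_V} occurs, Section~\ref{sec:iter_deleg} shows that the residual instance is a complete bipartite graph $B=(X^-,X^+,E_B)$ in which $S^-$ and $S^+$ are perfect matchings whose labels are locally minimum (resp.\ maximum) on \emph{both} endpoints (otherwise Lemma~\ref{lem:even-more} would provide a further $2$-hop dismounting opportunity and we would still be in Case~\ref{case:em_rec_neq_V}). At this point I would stop the recursion and invoke the layered delegations procedure, whose correctness and $\Theta(n\log n)$ edge bound are precisely Theorem~\ref{th:layered-number}. Combined with the $O(n)$ edges charged to earlier dismountings and to the trees in ${\cal T}^-\cup {\cal T}^+$ of the current level, the spanner of the original clique has $O(n) + O(n\log n) = O(n\log n)$ edges, as claimed.

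The main conceptual obstacle is bookkeeping: showing that running fireworks from scratch after each dismountability step does not blow up the accounting. The point to emphasize is that all edges selected during the recursion are charged either to a dismounted vertex (a constant amortized number per vertex, hence $O(n)$ total across the whole recursion) or to the single terminal invocation of layered delegations on the residual bipartite instance (at most $O(n\log n)$ edges by Theorem~\ref{th:layered-number}). Temporal connectivity of the output follows by induction: the dismountability step preserves connectivity by Theorem~\ref{thm:dismountability} (generalized to $2$-hop as in Section~\ref{sec:delegation}), and at the base case the correctness clause of Theorem~\ref{th:layered-number} guarantees that every emitter reaches every collector in the sparsified bipartite graph, which in turn suffices for temporal connectivity of the whole clique via the minimum/maximum delegation edges recorded in ${\cal T}^-$ and ${\cal T}^+$.
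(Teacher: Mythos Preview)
Your proposal is correct and follows essentially the same approach as the paper: recurse via $2$-hop dismounting (charging $O(1)$ edges per removed vertex) until either the clique is fully dismounted or the residual instance is the structured bipartite graph of Case~\ref{case:em_rec_eq_V}, then invoke Theorem~\ref{th:layered-number} for the $\Theta(n\log n)$ bound. Your write-up is in fact more explicit than the paper's own proof about the correctness side (the inductive use of Theorem~\ref{thm:dismountability} and the role of ${\cal T}^-,{\cal T}^+$ at the terminal level), but the overall argument and the $\Theta(n_1)+\Theta(n_2\log n_2)$ accounting are the same.
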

\begin{proof}
  In each recursion of the global algorithm, either the residual instance of the fireworks procedure is $2$-hop dismountable and the algorithm recurses on a smaller instance induced by a removed vertex, after selecting a {\em constant} number of edges, or the algorithm computes a $\Theta(n \log n)$-sparse spanner of the residual instance through the layered delegation process. Let $n_1$ be the number of times the graph is $2$-hop dismounted and $n_2 = n-n_1$ be the number of vertices of the residual instance when the layered delegation process begins (if applicable, $0$ otherwise). The $2$-hop dismounting contributes $\Theta(n_1)$ edges, and layered delegation contributes $\Theta(n_2 \log n_2)$ edges if it is performed, so the resulting spanner has $O(n \log n)$ edges.
\end{proof}

\section{Time Complexity and Summary of the Algorithm}
\label{sec:polynomial}

This short section reviews the cost in time of the main components involved in the algorithm. This discussion is by no means a detailed analysis, its purpose is rather to sustain the claim that the overall algorithm runs in polynomial time. The input to the algorithm is a temporal clique $\mathcal{G} = (G, \lambda)$, where $G = K_n$ is the complete graph on $n$ vertices, and $\lambda$ is a simple labelling of the edges, which may be represented as a permutation $\pi$ of $\{0, 1, 2, ..., {n \choose 2}-1\}$. 
The global algorithm is portrayed in \cref{fig:global_algorithm}.
Observe that whenever the algorithm recurses, the number of vertices is reduced by one, and in each recursion the fireworks process is run twice (forward and backward), possibly followed by the layered delegation processes, in which case the algorithm subsequently terminates. The fireworks process will thus run less than $2n$ times and the delegation process at most one time. The fireworks process first identifies the edges which are minimum (maximum) for at least one vertex. Then, it transforms this structure by means of a set of local operations consisting of flipping edges (at most once) or discarding them. As for layered delegations, the main operation is the composition of the delegation sets $X_a$ and $X_b$, which is done a logarithmic number of times by a greedy procedure whose main operation is to examine the degrees of all collectors to detect the local maximum among their labels. In light of these observations, we hope that it is clear to the reader that the overall running time is polynomial.

\begin{figure}
    \centering
    \includegraphics[scale=.7]{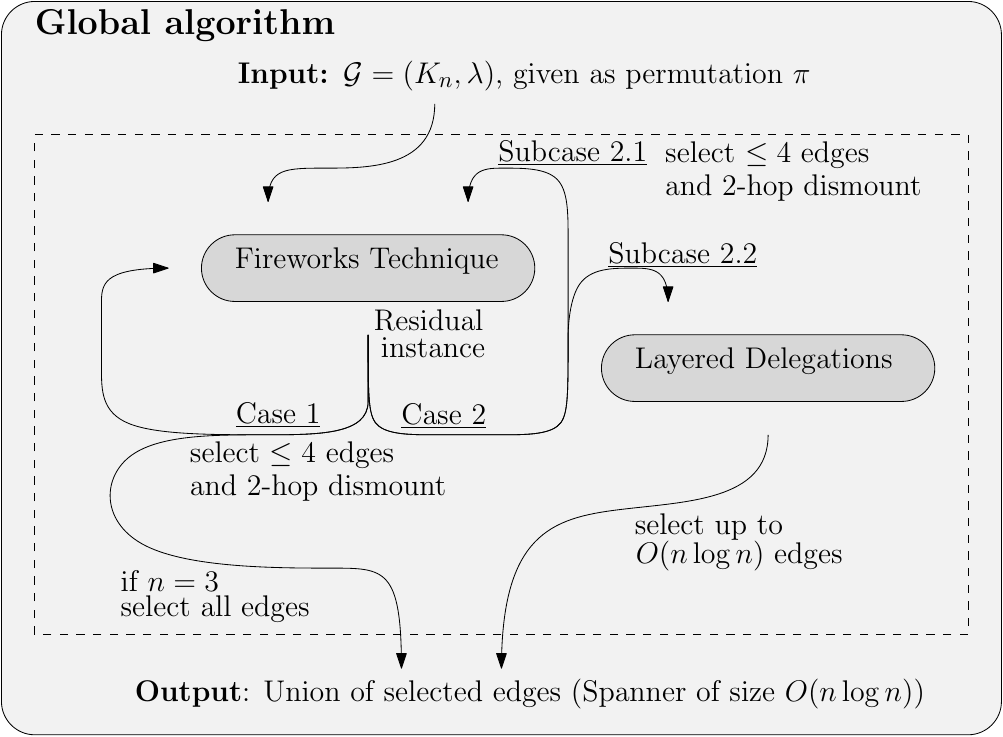}
    \caption{The global algorithm using the fireworks technique, dismounting, and layered delegations.}
    \label{fig:global_algorithm}
\end{figure}

\section{Open Questions}
\label{sec:tightness}

Despite extensive computer search, we were not able to find an instance (out of hundred of millions of instances of different sizes) which does not admit a spanner of either
$2n-3$ or $2n-4$ edges. The latter is actually a lower bound, as a classical result in gossip theory (see \eg Facts F29 through F32 in~\cite{HLPR13}) can be adapted to rule out the possibility that a graph with single labels is temporally connected with less than $2n-4$ edges.
Interestingly, we found many instances which are neither pivotable (see Section~\ref{sec:preliminary}) nor $k$-hop dismountable (see Section~\ref{sec:delegation}), and yet admit a spanner of size $2n-3$ or $2n-4$. This suggests further investigation to understand the structure of simple temporal cliques. In particular:

\begin{conjecture}
  \label{conj:linear}
  Do simple temporal cliques always admit temporal spanners of size $\Theta(n)$?
\end{conjecture}

\begin{conjecture}
  \label{conj:linear-strong}
  Do simple temporal cliques always admit temporal spanners of size $2n-3$?
\end{conjecture}

\noindent
A first step towards answering these questions might be to show that this is true in a probabilistic sense, which seems to be the case. More precisely, let a random simple temporal clique be built by assigning to every edge $e_i$ the label $\pi[i]$, where $\pi$ is a random permutation of $\{0,1,\ldots,{n \choose 2}-1\}$. 
Preliminary experiments suggest that random temporal cliques may asymptotically almost surely (\textit{a.a.s.}) be both pivotable and fully-dismountable (\textit{i.e.}, the probability that this is the case tends to $1$ as $n$ goes to infinity).  However, our experiments also suggest that there exist instances of arbitrary sizes which are neither pivotable nor dismountable.

\begin{conjecture}
  Do random simple temporal cliques \textit{a.a.s.} admit spanners of at most $2n-3$ edges?
\end{conjecture}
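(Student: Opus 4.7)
The plan is to establish the conjecture through the full-dismountability route of Section~\ref{sec:delegation}: I would attempt to show that a random simple temporal clique is a.a.s.\ fully dismountable, which by Fact~\ref{fact:dismounted-spanner} immediately yields a spanner of exactly $2n-3$ edges.

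The first step is to compute, for a fixed vertex $u$, the probability that $u$ is dismountable in a single-shot random clique. For each other vertex $v$, we have $\Pr[\{u,v\} = e^-(v)] = 1/(n-1)$, so the expected number of ``min-partners'' of $u$ is $1$. A Poisson-style computation (analogous to the study of records in random permutations) should yield $\Pr[u \text{ has at least one min-partner}] \to 1 - 1/e$, and symmetrically for max-partners. The two events involve disjoint families of rank constraints, so (approximate) independence would give $\Pr[u \text{ is dismountable}] \to (1 - 1/e)^2$. A second-moment argument should then concentrate the number of dismountable vertices around $(1 - 1/e)^2 n$, so in particular it is positive with very high probability.

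Next, I would iterate: remove a dismountable vertex $u$ and invoke the same bound on the residual clique $\G[V \setminus u]$, which still carries a uniformly random relative order on its labels. A union bound over the $n-2$ dismounting steps completes the argument provided the per-step failure probability is $o(1/n)$.

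The main obstacle is the accumulation of conditioning over these iterations. The event ``$u$ is dismountable'' imposes rank constraints mixing edges incident to $u$ with edges of $\G[V \setminus u]$ (e.g., $\{u,v\} = e^-(v)$ forces $\{u,v\}$ to be smaller than every other edge at $v$, one of which lives in the residual graph). Repeated conditioning can in principle bias the distribution of residual labels, so a naive induction breaks. To handle this, I would aim to prove a more robust fact: that w.h.p.\ a positive constant fraction of the surviving vertices is dismountable at each stage, giving enough slack to absorb the conditioning. A cleaner alternative would be a non-iterative structural proof: analyze the random ``min-partner'' digraph $G^-$ and its ``max-partner'' counterpart $G^+$ jointly, and extract a valid dismounting order from a suitable traversal of these objects (for instance, by repeatedly peeling vertices that are leaves in $G^-$ and in $G^+$, an event whose probability of never being available can be bounded in one shot). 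The base case of the recursion (very small $n$) must be handled explicitly to land precisely at $2n-3$ rather than a slightly weaker linear bound.
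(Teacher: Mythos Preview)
The statement you are attacking is listed in the paper as an \emph{open question}; the authors give no proof, only experimental evidence and the remark that random simple temporal cliques appear to be a.a.s.\ both pivotable and fully dismountable (this is precisely their Open Question~4). There is therefore no paper proof to compare your proposal against: you are sketching an attack on a conjecture the authors explicitly leave unresolved.

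Your chosen route---prove a.a.s.\ full dismountability and invoke Fact~\ref{fact:dismounted-spanner}---is exactly the one the surrounding discussion invites, and your one-step heuristics (expected number of min-partners equal to~$1$, Poisson approximation giving $\Pr[\text{at least one}]\to 1-1/e$, approximate independence of the min- and max-sides) are reasonable. A second-moment or Janson-type bound would plausibly make the first step rigorous with exponentially small failure probability.

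The genuine gap is the one you yourself isolate: the iteration. Your sentence ``which still carries a uniformly random relative order on its labels'' is only true if the removed vertex is chosen \emph{independently} of the labeling; the moment you select $u$ \emph{because} it is dismountable, you have conditioned on an event that couples edges inside and outside the residual clique, and uniformity is lost. Neither of your two proposed fixes (robust constant-fraction dismountability with slack, or a one-shot structural analysis of $G^-$ and $G^+$) is carried out, and these are exactly where the difficulty of the open problem lies. A further wrinkle you gloss over in your ``base case'' remark: the paper exhibits non-dismountable $K_4$'s (Appendix~\ref{apx:non-dismountable}), so even under a hypothetical uniform distribution on the final residual $K_4$ the recursion halts with constant probability; you would need to stop the dismounting at $n=4$ and invoke the separate fact that every simple temporal $K_4$ admits a $5$-edge spanner, rather than appeal to Fact~\ref{fact:dismounted-spanner} directly.
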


\begin{conjecture}
  Are random simple temporal cliques \textit{a.a.s.} fully ($k$-hop) dismountable? 
\end{conjecture}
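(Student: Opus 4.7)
My plan is to combine a per-vertex probabilistic estimate of dismountability with an inductive argument that controls how the label distribution drifts as vertices are removed. I will interpret the conjecture in its most interesting form: that random simple temporal cliques are a.a.s.\ \emph{fully} $k$-hop dismountable for some small $k$ (ideally $k=1$), which would imply $\Theta(n)$-sparse spanners by Fact~\ref{fact:k-hop-number}.

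First I would analyze a single step. In a random temporal clique on $n$ vertices obtained by labeling $E$ with a uniformly random permutation of $[\binom{n}{2}]$, I would estimate the probability that a specific vertex $u$ is $k$-hop dismountable. For $k=1$: a direct computation shows that $u$ is the minimum-neighbor of some vertex (equivalently has at least one out-arc in $E^-$) with probability tending to $1-1/e$, symmetrically for $E^+$; a short correlation computation (the identities of the min-neighbor and max-neighbor are asymptotically independent over a uniform permutation) then yields a constant lower bound on the probability that $u$ is $1$-hop dismountable, so $\Theta(n)$ vertices are dismountable in expectation. For $k \geq 2$: by an expansion argument on the temporal reachability graph, the set of vertices reachable from $u$ via a journey of at most $k$ hops has size $\Omega(n)$ with high probability, and then pigeonholing the $n-1$ min-edge endpoints $e^-(v)$ against this reachable set produces at least one $v$ at which a journey from $u$ ends through $e^-(v)$. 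A symmetric argument handles the reception side using $e^+$. The conclusion is that a fixed vertex is $k$-hop dismountable with probability $1-o(1)$ for suitably chosen constant (or slowly growing) $k$.

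Second, to pass from a single-vertex statement to \emph{full} $k$-hop dismountability, I would try to show that \emph{every} vertex (or at least a large fraction) is simultaneously $k$-hop dismountable a.a.s. The key structural observation enabling recursion is that when one removes a vertex $v$, the relative order of the labels on the $\binom{n-1}{2}$ surviving edges remains a uniform permutation of $[\binom{n-1}{2}]$, so the residual instance is again a random temporal clique on $n-1$ vertices, \emph{up to the bias} introduced by the rule that selected $v$. If the rule can be taken to be ``pick any dismountable vertex,'' this bias is negligible. Iterating, a union bound (or a Doob martingale on vertex exposures) over the $n-2$ recursion steps would then yield full $k$-hop dismountability, provided the per-step failure probability is $o(1/n)$.

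The main obstacle I anticipate is precisely this last requirement. The natural concentration bound on the event ``every vertex is $k$-hop dismountable'' is likely of the form $1-o(1)$ rather than $1-o(1/n)$, which is too weak for a direct union bound across $n$ recursion levels. Two remedies come to mind. The first is to enlarge $k$, say to $k=O(\log \log n)$, and combine the expansion argument with Azuma--Hoeffding on an edge-exposure martingale to push the failure probability to $n^{-\omega(1)}$; this would answer the asymptotic question but at the cost of a non-constant $k$ (hence only $O(n \log \log n)$-sparse spanners, not $\Theta(n)$). The second, more ambitious remedy is to construct an explicit coupling between the conditional residual instance and an independently sampled random temporal clique on $n-1$ vertices, invoking a vertex-exchange or label-exchange symmetry, so that the inductive hypothesis applies cleanly. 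This is analogous to couplings used in random greedy matching and colouring arguments, but the presence of two competing extremal structures ($E^-$ and $E^+$) that must be controlled simultaneously makes the adaptation non-trivial, and I expect the bulk of the technical work to lie here.
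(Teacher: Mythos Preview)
The statement you are trying to prove is not a theorem in the paper: it is listed under Section~\ref{sec:tightness} (``Open Questions'') and the environment \texttt{conjecture} is declared with the display name ``Open question.'' The paper provides no proof whatsoever for this statement; it merely records it as a question motivated by preliminary experiments. There is therefore nothing in the paper to compare your attempt against.

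As for the substance of your sketch, you have correctly identified the main difficulty yourself: the recursion step does not preserve uniformity. Your claim that ``the relative order of the labels on the $\binom{n-1}{2}$ surviving edges remains a uniform permutation'' is only true if the removed vertex $v$ is chosen \emph{independently} of the labeling; as soon as $v$ is selected because it is dismountable, the conditioning biases the residual distribution in a way that is not obviously negligible (for instance, knowing that $\{u,v\}=e^-(v)$ constrains the relative order of the edges incident to $v$'s neighbors). Your proposed coupling remedy is exactly the missing ingredient, and you are right that it is the hard part; neither your sketch nor anything in the paper supplies it. The first remedy (growing $k$) is more plausible but would answer a weaker question than the one you set out to prove, and even there the expansion argument on temporal reachability is asserted rather than established---temporal paths do not compose like ordinary paths, so standard neighborhood-expansion arguments for random graphs do not transfer directly. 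In short, your proposal is a reasonable outline of where a proof might go, but it is not a proof, and the paper does not claim one either.
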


\begin{conjecture}
  Are random simple temporal cliques \textit{a.a.s.} pivotable?
\end{conjecture}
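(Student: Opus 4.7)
The plan is to prove the stronger statement that a.a.s.\ \emph{every} vertex is a pivot, which immediately implies a.a.s.\ pivotability. First, I would pass to the equivalent model in which each edge carries an independent Uniform$[0,1]$ label; since journey existence depends only on the relative order of labels, this model has the same distribution over temporal reachability as the random permutation model. For a vertex $v$, let $a_v$ denote the maximum over $u \neq v$ of the earliest arrival time of a journey $u \to v$, and let $d_v$ denote the minimum over $w \neq v$ of the latest departure time of a journey $v \to w$. Then $v$ is a pivot iff $a_v \leq d_v$, so it suffices to show that $a_v < 1/2 < d_v$ a.a.s.\ for every $v$.

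\textbf{Two-hop reachability.} The key estimate is that for any fixed ordered pair $(u,v)$, the probability that no length-$2$ journey $u \to w \to v$ uses only labels in $[0, t]$ is at most $\exp(-(n-2) t^2 / 2)$. Indeed, for each candidate intermediary $w$, the pair of edges $\{u,w\}, \{w,v\}$ is disjoint from the analogous pair for any other choice of $w$, so the events ``$\lambda(\{u,w\}) \leq \lambda(\{w,v\}) \leq t$'' are mutually independent across $w$, each occurring with probability $t^2/2$. Choosing $t = c \sqrt{\log n / n}$ with $c$ large enough makes the failure probability $o(n^{-3})$, and a union bound over the $n(n-1)$ ordered pairs gives that a.a.s.\ every $u$ reaches every $v$ using only labels $\leq c \sqrt{\log n/n}$, so $a_v \leq c \sqrt{\log n/n}$ for all $v$ simultaneously. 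Applying the same argument to the reversed labeling $1 - \lambda$ yields $d_v \geq 1 - c \sqrt{\log n/n}$ for all $v$. For $n$ large enough $c \sqrt{\log n/n} < 1/2$, so every vertex is a pivot with threshold $1/2$, establishing the conjecture.

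\textbf{Main obstacle.} The analysis itself is largely routine once the reduction to the Uniform$[0,1]$ model and the disjoint-intermediary independence are in place; the main conceptual care is in verifying the coupling, which works cleanly because pivotability is an event depending only on the order of labels. The place where I expect real difficulty is if one tries to refine this kind of analysis to attack the harder companion conjectures, namely that random cliques are a.a.s.\ fully dismountable or admit spanners with only $2n-3$ edges. There, the $O(\sqrt{\log n / n})$ slack would no longer be available: these stronger properties hinge on fine-grained choices of delegating vertices and matching neighbors, and one would likely need a more combinatorial argument tracking the joint distribution of the sorted labels incident to many vertices simultaneously, rather than the coarse two-hop Markov-style bound that suffices for pivotability.
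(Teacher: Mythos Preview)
The statement you are addressing is listed in the paper as an \emph{open question} (in Section~\ref{sec:tightness}); the paper offers no proof, only the remark that preliminary experiments suggest the answer is yes. So there is nothing to compare against --- your proposal is not a reproof but a resolution of the conjecture.

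As a resolution, your argument looks correct. The passage to i.i.d.\ Uniform$[0,1]$ labels is valid because pivotability depends only on the relative order of the labels, and ties occur with probability zero. The key independence observation is sound: for fixed $u,v$ and distinct intermediaries $w,w'$, the edge pairs $\{\{u,w\},\{w,v\}\}$ and $\{\{u,w'\},\{w',v\}\}$ are disjoint, so the events ``$\lambda(\{u,w\})\le\lambda(\{w,v\})\le t$'' are genuinely independent across $w$, each with probability $t^2/2$. Taking $t=c\sqrt{\log n/n}$ with $c$ a sufficiently large constant and union-bounding over the $O(n^2)$ ordered pairs gives $a_v\le t$ for all $v$ simultaneously, and the time-reversal $\lambda\mapsto 1-\lambda$ (which preserves the distribution) gives $d_v\ge 1-t$. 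Since $t<1/2$ for large $n$, the two windows are disjoint and every vertex is a pivot with room to spare. In fact you prove more than was asked: not just that a pivot exists a.a.s., but that \emph{every} vertex is one.

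Your closing paragraph correctly identifies where this coarse two-hop bound would stop being useful: the companion questions on full dismountability and $2n-3$-edge spanners require control of local extrema (which edge is $e^-(v)$, $e^+(v)$) rather than merely the existence of early/late journeys, and the $\Theta(\sqrt{\log n/n})$ slack is irrelevant there.
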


\noindent On the deterministic side again, a natural question is whether a more general class than complete graphs deterministically admits sparse spanners, and what is the role of density. The family of counter-examples from Axiotis and Fotakis~\cite{AF16} have asymptotic density $1/9$, which leaves a significant gap between this family and complete graphs.

\begin{conjecture}
  Is there a larger class of dense graphs than complete graphs that always admit $o(n^2)$-sparse temporal spanners?
\end{conjecture}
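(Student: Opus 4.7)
The plan is to identify a natural class of ``almost complete'' graphs sitting between the Axiotis--Fotakis family (asymptotic density $1/9$) and the complete case, and to lift the fireworks/delegation machinery developed in this paper to that class. Concretely, I would first target graphs $G$ of minimum degree at least $(1-\varepsilon)n$ for some small absolute constant $\varepsilon>0$, and then attempt to push $\varepsilon$ as far as possible. The starting observation is that Lemma~\ref{fact:minus}, the atomic ingredient of every delegation, degrades gracefully under missing edges: if $\{u,v\}=e^-(v)$ and $v$ has degree $d_v\ge(1-\varepsilon)n-1$, then $u$ can still reach at least $d_v-1$ vertices through $v$, leaving at most $\varepsilon n$ uncovered. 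I would redefine a vertex as \emph{approximately $k$-hop dismountable} if such a delegation misses only a small set $M(u)$, and compensate by selecting direct edges from $u$ to the missed vertices; Fact~\ref{fact:k-hop-number} would then be replaced by a bound of the form $2k(n-2)+\sum_u |M(u)|$.

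Next, the fireworks constructions of Section~\ref{sec:fireworks} are purely label-based and the in-trees ${\cal T}^-$ and out-trees ${\cal T}^+$ can be built verbatim on any graph. The quantity to re-audit is the bipartite structure $H$ between the emitter set $X^-$ and the collector set $X^+$: in the complete case $H$ is complete bipartite, while in the almost-complete case it is ``almost'' complete bipartite, missing a controlled fraction of edges. The layered delegation of Section~\ref{sec:iter_deleg} should then survive, because the pigeonhole core of Lemma~\ref{lem:half-eliminated} only requires that the average degree of collectors in each rank interval $\mathcal{I}_j$ remains above a fixed constant; shaving an $\varepsilon$-fraction of the edges only affects constants. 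The missing minimum/maximum matching edges $S^-, S^+$ would need to be replaced by any surviving ``near-minimum'' edge, together with a bookkeeping argument showing that the uncovered collectors of each eliminated emitter can still be paid for with $O(\log n)$ direct edges per emitter.

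The main obstacle is twofold. First, the interval between $1/9$ (the density of the Axiotis--Fotakis construction) and $1-\varepsilon$ is enormous, and any argument based on local delegation seems doomed to operate only in the regime where every pair of vertices is adjacent with probability close to one; closing the gap would likely require a global structural decomposition of dense temporal graphs (for instance, a regularity-type partition that isolates small ``bad'' zones treated with the subclique technique of Wilson's Theorem, and large ``good'' zones treated by fireworks). Second, cascading loss is a real risk: an emitter $u$ that delegates to $v$ may find that $v$ itself has been dismounted to $w$, so the coverage must be tracked transitively; formalizing this without blowing up the edge budget beyond $o(n^2)$ is where I expect the proof to be most delicate.

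As a sanity check and a stepping stone, I would first verify the approach on $G(n,p)$ with $p=1-o(1)$, where concentration should make the fireworks procedure behave essentially as in the complete case, and then try to derandomize. If this succeeds, the natural next target is graphs obtained from $K_n$ by deleting an arbitrary subgraph of maximum degree $o(n)$; a positive answer there would already significantly narrow the density gap left open by \cite{AF16} and this paper.
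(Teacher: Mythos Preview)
The statement you are addressing is not a theorem in the paper; it is explicitly listed as an \emph{open question} (the environment \texttt{conjecture} is defined in the preamble as ``Open question''). The paper contains no proof of it, nor even a sketch: Section~\ref{sec:tightness} simply records the question and notes that the Axiotis--Fotakis family has asymptotic density $1/9$, leaving a gap with the complete case. There is therefore nothing to compare your proposal against.

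What you have written is a research plan rather than a proof, and you seem aware of this (``The plan is to\ldots'', ``The main obstacle is\ldots''). As a plan it is reasonable and engages seriously with the paper's machinery, but it does not settle the question. Two points are worth flagging. First, your approximate-dismountability idea pays $|M(u)|\le \varepsilon n$ extra edges per dismounted vertex, which over $\Theta(n)$ vertices is already $\Theta(\varepsilon n^2)$; without a further argument that most $M(u)$ are much smaller, this alone does not give $o(n^2)$. Second, the structural dichotomy in Section~\ref{sec:iter_deleg} (Case~2, where $X^-\cup X^+=V$ forces two perfect matchings with the min/max property) relies on exact counting that breaks as soon as a single edge is missing; your ``near-minimum edge'' substitute would need a genuinely new argument, not just a perturbation of Lemma~\ref{lem:even-more}. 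These are precisely the ``cascading loss'' and ``bookkeeping'' issues you yourself identify as delicate, and they are the reason the question is open.
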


Finally, we know that some graphs exist which are neither pivotable nor dismountable (see Figure~\ref{fig:neither} in Section ~\ref{apx:non-dismountable}), and some experiments that we have conducted suggest that arbitrarily large graphs that are both non-pivotable and non-dismountable may exist. However, the characterization of an infinite family with these properties is left open in this paper. 
\begin{conjecture}
  Is there a canonical way to construct graphs of arbitrary size which are neither pivotable nor dismountable?
\end{conjecture}

\section{Concluding remarks}
\label{sec:conclusion}

In this paper, we established that sparse temporal spanners always exist in temporal cliques, proving constructively that one can find $O(n \log n)$ edges that suffice to preserve temporal connectivity.
Our results hold for non-strict journeys with single or multiple labels on each edge, and strict journeys with single or multiple labels on each edge with the property that there is a subset of locally exclusive single labels. Our results give the first positive answer to the question of whether any class of dense graphs always has sparse temporal spanners.

To prove our results, we introduced several techniques
(pivotability, delegation, dismountability and $k$-hop dismountability, forward and backward fireworks, partial delegation, and layered delegations), all of which are original and some of which might be of independent interest. Whether some of these techniques can be used for more general classes of graphs is an open question. Delegation and dismounting rely explicitly on the graph being complete; however, refined versions of these techniques like partial delegation might have wider applicability.

One of the main open questions is whether sparse spanners exist in more general classes of dense graphs, keeping in mind that some dense graphs are unsparsifiable. Another is whether a better density than $O(n \log n)$ can be achieved in the particular case of temporal cliques, and more precisely can spanners of size $O(n)$ always be found in this case. Experiments that we have conducted suggest that spanners with $O(n)$ edges might always exist. At a deeper level, all these questions pertain to identifying and studying analogues of spanning trees in temporal graphs, which do not enjoy the same matroid structure as in standard graphs and seem much more complex.\\

\noindent {\bf Acknowledgements.} We thank the referees for their careful reading and constructive comments which significantly improved the presentation of these results.

\bibliographystyle{plain}
\bibliography{jcss}

\begin{thebibliography}{10}

\bibitem{AHU}
Alfred~V. Aho, John~E. Hopcroft, and Jeffrey~D. Ullman.
\newblock {\em Data Structures and Algorithms}.
\newblock Addison-Wesley, 1983.

\bibitem{AGMS17}
Eleni~C. Akrida, Leszek Gasieniec, George~B. Mertzios, and Paul~G. Spirakis.
\newblock The complexity of optimal design of temporally connected graphs.
\newblock {\em Theory of Computing Systems}, 61(3):907--944, 2017.

\bibitem{arya}
Sunil Arya, David~M Mount, and Michiel Smid.
\newblock Dynamic algorithms for geometric spanners of small diameter:
  Randomized solutions.
\newblock {\em Computational Geometry}, 13(2):91--107, 1999.

\bibitem{AE84}
B.~Awerbuch and S.~Even.
\newblock Efficient and reliable broadcast is achievable in an eventually
  connected network.
\newblock In {\em 3rd Symposium on Principles of Distributed Computing (PODC)},
  pages 278--281, 1984.

\bibitem{AF16}
Kyriakos Axiotis and Dimitris Fotakis.
\newblock On the size and the approximability of minimum temporally connected
  subgraphs.
\newblock In {\em 43rd International Colloquium on Automata, Languages, and
  Programming (ICALP)}, pages 149:1--149:14, 2016.

\bibitem{BFJ03}
Bin-Minh {Bui-Xuan}, Afonso Ferreira, and Aubin Jarry.
\newblock Computing shortest, fastest, and foremost journeys in dynamic
  networks.
\newblock {\em International Journal of Foundations of Computer Science},
  14(2):267--285, 2003.

\bibitem{CFQS12}
Arnaud Casteigts, Paola Flocchini, Walter Quattrociocchi, and Nicola Santoro.
\newblock Time-varying graphs and dynamic networks.
\newblock {\em International Journal of Parallel, Emergent and Distributed
  Systems}, 27(5):387--408, 2012.

\bibitem{restless}
Arnaud Casteigts, Anne{-}Sophie Himmel, Hendrik Molter, and Philipp Zschoche.
\newblock The computational complexity of finding temporal paths under waiting
  time constraints.
\newblock {\em CoRR}, abs/1909.06437, 2019.

\bibitem{CPS19}
Arnaud Casteigts, Joseph~G. Peters, and Jason Schoeters.
\newblock Temporal cliques admit sparse spanners.
\newblock In {\em 46th International Colloquium on Automata, Languages, and
  Programming (ICALP)}, pages 134:1--134:14, 2019.

\bibitem{Chew86}
Paul Chew.
\newblock There is a planar graph almost as good as the complete graph.
\newblock In {\em 2nd Symposium on Computational Geometry}, pages 169--177,
  1986.

\bibitem{diluna}
Giuseppe~Antonio Di~Luna, Stefan Dobrev, Paola Flocchini, and Nicola Santoro.
\newblock Distributed exploration of dynamic rings.
\newblock {\em Distributed Computing}, 33(1):41--67, 2020.

\bibitem{elkin}
Michael Elkin.
\newblock A near-optimal distributed fully dynamic algorithm for maintaining
  sparse spanners.
\newblock In {\em 26th ACM Symposium on Principles of Distributed Computing},
  pages 185--194, 2007.

\bibitem{erlebach}
Thomas Erlebach, Michael Hoffmann, and Frank Kammer.
\newblock On temporal graph exploration.
\newblock In Magn{\'{u}}s~M. Halld{\'{o}}rsson, Kazuo Iwama, Naoki Kobayashi,
  and Bettina Speckmann, editors, {\em 42nd International Colloquium on
  Automata, Languages, and Programming (ICALP)}, volume 9134 of {\em LNCS},
  pages 444--455. Springer, 2015.

\bibitem{gottlieb}
Lee-Ad Gottlieb and Liam Roditty.
\newblock Improved algorithms for fully dynamic geometric spanners and
  geometric routing.
\newblock In {\em 19th ACM-SIAM Symposium on Discrete Algorithms (SODA)}, pages
  591--600, 2008.

\bibitem{HLPR13}
H.A. Harutyunyan, A.L. Liestman, J.G. Peters, and D.~Richards.
\newblock Broadcasting and gossiping in communication networks.
\newblock In J.L. Gross, J.~Yellen, and P.~Zhang, editors, {\em Handbook of
  Graph Theory, Second Edition}, chapter 12.2, pages 1477--1494. CRC Press,
  2013.

\bibitem{holme}
Petter Holme and Jari Saram{\"a}ki.
\newblock {\em Temporal Networks}.
\newblock Springer, 2013.

\bibitem{KKK02}
David Kempe, Jon Kleinberg, and Amit Kumar.
\newblock Connectivity and inference problems for temporal networks.
\newblock {\em Journal of Computer and System Sciences}, 64(4):820--842, 2002.

\bibitem{matchings}
George~B. Mertzios, Hendrik Molter, Rolf Niedermeier, Viktor Zamaraev, and
  Philipp Zschoche.
\newblock Computing maximum matchings in temporal graphs.
\newblock In {\em 37th International Symposium on Theoretical Aspects of
  Computer Science (STACS)}, volume 154 of {\em LIPIcs}, pages 27:1--27:14.
  Schloss Dagstuhl - Leibniz-Zentrum f{\"{u}}r Informatik, 2020.

\bibitem{michail}
Othon Michail and Paul~G Spirakis.
\newblock Elements of the theory of dynamic networks.
\newblock {\em Communications of the ACM}, 61(2):72--72, 2018.

\bibitem{NS07}
Giri Narasimhan and Michiel Smid.
\newblock {\em Geometric Spanner Networks}.
\newblock Cambridge Univ. Press, 2007.

\bibitem{orda90}
Ariel Orda and Raphael Rom.
\newblock Shortest-path and minimum-delay algorithms in networks with
  time-dependent edge-length.
\newblock {\em Journal of the ACM (JACM)}, 37(3):607--625, 1990.

\bibitem{PS89}
David Peleg and Alejandro~A Sch{\"a}ffer.
\newblock Graph spanners.
\newblock {\em Journal of Graph Theory}, 13(1):99--116, 1989.

\bibitem{viard}
Tiphaine Viard, Matthieu Latapy, and Cl{\'e}mence Magnien.
\newblock Computing maximal cliques in link streams.
\newblock {\em Theoretical Computer Science}, 609:245--252, 2016.

\bibitem{Wilson75}
Richard~M. Wilson.
\newblock Decompositions of complete graphs into subgraphs isomorphic to a
  given graph.
\newblock In {\em 5th British Combinatorial Conference}, pages 647--659.
  Congressus Numerantium XV, 1975.

\bibitem{niedermeier}
Philipp Zschoche, Till Fluschnik, Hendrik Molter, and Rolf Niedermeier.
\newblock The complexity of finding small separators in temporal graphs.
\newblock In {\em 43rd International Symposium on Mathematical Foundations of
  Computer Science (MFCS)}, pages 45:1--45:17, 2018.

\end{thebibliography}

\end{document}